%
%
%
%

\documentclass[runningheads,a4paper]{llncs}

\usepackage{amssymb}
\setcounter{tocdepth}{3}
\usepackage{graphicx}
\usepackage{comment}
\usepackage{enumitem}
\usepackage{lipsum}
\usepackage{ntheorem}
\usepackage{caption}

\usepackage[breaklinks]{hyperref}
\newcommand{\orcid}[1]{\href{https://orcid.org/#1}{\includegraphics[width=8pt]{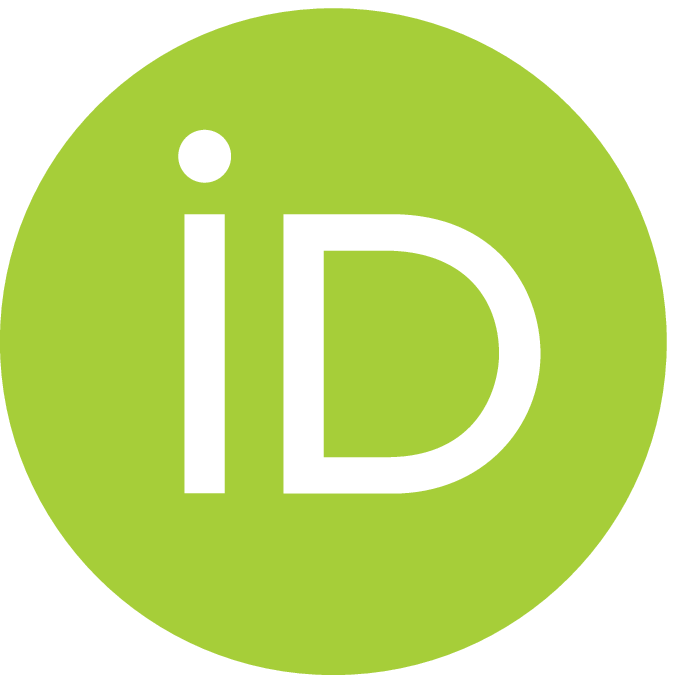}}}

\usepackage{url}
\urldef{\mailsa}\path|shenmaier@mail.ru|

\usepackage{natbib}
\bibliographystyle{apalike}
\setcitestyle{authoryear,open={(},close={)},aysep={}}

\theorembodyfont{\upshape}
\newtheorem{prob}{Problem}
\newtheorem*{example*}{Example.}
\theorembodyfont{\itshape}
\newtheorem*{definition*}{Definition.}
\newtheorem{fact}{Fact}

\textwidth 13.1cm
\AtBeginDocument{%
  \setlength{\oddsidemargin}{\dimexpr(\paperwidth-\textwidth)/2-1in}%
  \setlength{\evensidemargin}{\oddsidemargin}%
  \setlength{\topmargin}{%
    \dimexpr(\paperheight-\textheight)/2-\headheight-\headsep-1in}%
}

\renewenvironment{abstract}{
  \small
  \list{}{
    \setlength{\leftmargin}{.5cm}%
    \setlength{\rightmargin}{\leftmargin}%
  }%
  \item\relax
}

\begin{document}

\mainmatter  

\title{Linear-Size Universal Discretization of Geometric Center-Based Problems in Fixed Dimensions}

\titlerunning{Linear-Size Universal Discretization of Center-Based Problems}

%
%
\author{Vladimir Shenmaier \orcid{0000-0002-4692-1994}}
%
\authorrunning{V.\,V. Shenmaier}

\institute{Sobolev Institute of Mathematics, Novosibirsk, Russia\\
\mailsa}

%
%

\tocauthor{Vladimir Shenmaier}
\maketitle

\begin{abstract}{\bf Abstract.}
Many geometric optimization problems can be reduced to finding points in space (centers) minimizing an objective function which continuously depends on the distances from the centers to given input points.
Examples are $k$-Means, Geometric $k$-Median/Center, Continuous Facility Location, $m$-Variance, etc.
We prove that, for any fixed $\varepsilon>0$, every set of $n$ input points in fixed-dimensional space with the metric induced by any vector norm admits a set of $O(n)$ candidate centers which can be computed in almost linear time and which contains a $(1+\varepsilon)$-approximation of each point of space with respect to the distances to all the input points.
It gives a universal approximation-preserving reduction of geometric center-based problems with arbitrary continuity-type objective functions to their discrete versions where the centers are selected from a fairly small set of candidates.
The existence of such a linear-size set of candidates is also shown for any metric space of fixed doubling dimension.
\begin{keywords}
Geometric optimization $\cdot$ Clustering $\cdot$ Facility location $\cdot$ Metric space $\cdot$ Approximate centers $\cdot$ Discretization
\end{keywords}
\end{abstract}

\section{Introduction}
We study the following concept, which may be useful for developing approximation algorithms with performance guarantees for geometric center-based problems.
Given a finite set $X$ in a metric space $({\cal M},dist)$, a \emph{$(1+\varepsilon)$-approximate centers collection} or, shortly, a \emph{$(1+\varepsilon)$-collection} for the set $X$ is a subset of ${\cal M}$ which, for every point $p\in{\cal M}$, contains a point $p'$ such that the distance from $p'$ to each element of $X$ is at most $1+\varepsilon$ of that from $p$.

A $(1+\varepsilon)$-collection contains approximations of all the points of ${\cal M}$ with respect to the distances to all the given points.
In particular, it contains approximate solutions to any geometric optimization problem reducible to finding points in space (centers) with the minimum value of arbitrary objective function which has a continuity-type dependence on the distances from the centers to given points.
In fact, a finite $(1+\varepsilon)$-collection is a universal discretization of the space $({\cal M},dist)$ which contains approximate minimums of all such functions.

Geometric center-based problems we describe typically arise in clustering, pattern recognition, facility location, etc.
In these problems, \emph{centers} may be any points of space and may have various practical meanings, e.g., represent sources of detected signals or locations for placing facilities.
In general, problems of finding optimal geometric centers can be written in the following form:\medskip

\noindent\textbf{Geometric Center-Based problem.}
Given a finite set $X$ in a metric space $({\cal M},dist)$, a finite set of positive integers $K$, and a non-negative function $f$ which is defined for any finite set of points and any $k$-element tuple of centers, where $k\in K$.
Find a tuple of centers $c_1,\dots,c_k\in{\cal M}$, $k\in K$, to minimize the value of $f(X;c_1,\dots,c_k)$.\medskip

The mentioned continuity-type dependence of the objective function on the point-to-center distances means that a small relative increase in the distances from the centers to input points must give a bounded relative increase in the objective function value.
This can be formalized as follows:

\begin{definition*}
A non-negative function $f$ satisfies the \emph{continuity-type condition} if there exists a mapping {\rm(}modulus\/{\rm)} $\mu:[1,\infty)\rightarrow [1,\infty)$ such that, for any finite set \mbox{$X\subseteq{\cal M}$}, centers $c_1,c'_1,\dots,c_k,c'_k\in{\cal M}$, and real value $\varepsilon>0$ which fulfill all the inequalities $dist(x,c'_i)\le(1+\varepsilon)\,dist(x,c_i)$, $x\in X$, $i=1,\dots,k$, we have
$$f(X;c'_1,\dots,c'_k)\le\mu(1+\varepsilon)\,f(X;c_1,\dots,c_k).$$
\end{definition*}

It seems to be a natural condition for geometric center-based problems.
As an example, the objective functions of the $k$-Means \citep{Matousek,KumarA2010,KumarA2014}, Geometric $k$-Me\-di\-an \citep{Arora,BHI,Chen2006}, Geometric $k$-Cen\-ter \citep{AP,BHI,KumarP2003}, Continuous Facility Location \citep{Meira}, $m$-Va\-ri\-ance \citep{Aggarwal,EE,Shen2012}, Smallest $m$-En\-c\-lo\-sing Ball
\citep{AHV2005,Shen2015}
problems satisfy this condition with modulus $\mu(1+\varepsilon)=(1+\varepsilon)^g$, where $g\in\{1,2\}$.

A finite $(1+\varepsilon)$-collection provides an approximation-preserving reduction of Geometric Center-Based problem with any continuity-type objective function to the discrete version of this problem where the centers are selected from a finite set:\medskip

\noindent\textbf{Discrete Center-Based problem.}
Given a finite set $X$ in a metric space $({\cal M},dist)$, a finite set of positive integers $K$, and a finite set of feasible centers $Y\subseteq{\cal M}$.
Find a tuple of centers $c_1,\dots,c_k\in Y$, $k\in K$, to minimize the value of $f(X;c_1,\dots,c_k)$.\medskip

Indeed, the definition of a $(1+\varepsilon)$-collection combined with the continuity-type condition immediately implies that an optimum solution $c_1,\dots,c_k$ to the instance of Discrete Center-Based problem with the set $Y$ formed by a $(1+\varepsilon)$-collection for the set $X$ is a $\mu(1+\varepsilon)$-approximate solution to Geometric Center-Based problem.
Si\-mi\-lar\-ly, for any $\beta\ge 1$, a $\beta$-approximate solution to that instance of the discrete problem is also a $\beta\mu(1+\varepsilon)$-approximate solution to the geometric problem.
In particular, if the number of desired centers is bounded by a small constant, then a brute force enumeration of small subsets of a $(1+\varepsilon)$-collection for $X$ gives a $\mu(1+\varepsilon)$-approximate solution to Geometric Center-Based problem.

Note that such a universal instrument as $(1+\varepsilon)$-collections may be relevant in cases when the known fast methods of generating candidate centers are not applicable or do not give desired approximation guarantees:\medskip
\begin{itemize}[label={$\bullet$}]
\item 
The objective function has a more complex form than the classical sums or ma\-xi\-ma of the distances from the centers to input points or the sums of the squared distances.
In general, it may be arbitrary continuity-type function of the point-to-center distances.\smallskip
\item It is required to choose a given-size subset of input points to minimize some continuity-type objective function on this subset.
Such a requirement is typical for problems with ``outliers'' or ``partial covering''.
In this case, standard techniques based on random sampling are not effective since the given size of a desired subset may be arbitrarily small, so any constant number of random samples may ``miss'' good subsets.\smallskip
\item More complex constraints are imposed on the service of the input points by the desired centers than those in usual models, e.g., each input point may be served by a given number of centers, each center has its own capacity and service radius, each point-to-center connection has its own unit distance cost, etc.\smallskip
\item The problem has one or multiple objectives, possibly not specified explicitly, and an oracle is given which, for any two tuples of centers, answers which tuple is better.
In this case, if all the objectives satisfy the continuity-type condition for some modulus $\mu$, then enumerating all the tuples of points in a $(1+\varepsilon)$-collection for the input set gives a $\mu(1+\varepsilon)$-approximate solution to the problem.
\end{itemize}\medskip

\noindent\textbf{Related work.}
The concept of an $\alpha$-collection was introduced in \citep{Shen2019,Shen2020} for the case of Euclidean metric.
In these works, it was suggested an algorithm which computes polynomial-size $(1+\varepsilon)$-collections in high dimensions:
For any fixed $\varepsilon\in(0,1]$ and any set of $n$ points in space $\mathbb R^d$, this algorithm outputs a Euclidean $(1+\varepsilon)$-collection of cardinality
$N(n,\varepsilon)=O\big(n^{\lceil\log_2(2/\varepsilon)/\varepsilon\rceil}\big)$ in time $O(N(n,\varepsilon)\,d\,)$.
It is interesting that the obtained cardinality $N(n,\varepsilon)$ does not depend on the dimension of space.
On the other hand, it was shown that, for any fixed $\varepsilon>0$, the minimum cardinality of a $(1+\varepsilon)$-collection for a given set of $n$ points in high-dimensional Euclidean space is $\Omega\big(n^{\lfloor 1/(16\varepsilon)+1\rfloor}\big)$ in the worst case \citep{Shen2020}.

As an application of $(1+\varepsilon)$-collections, approximation algorithms were obtained for the following clustering problems which contain $k$-Means with outliers, Geometric $k$-Median with outliers, and their versions with cardinality constraints:

\begin{prob}\label{pr1}
Given points $x_1,\dots,x_n$ in space $\mathbb R^d$, integers $k,m\ge 1$, unit distance costs $f_{ij}\ge 0$, and degrees $g_{ij}\in[0,g]$, $i=1,\dots,k$, $j=1,\dots,n$, where $g$ is some parameter.
Find disjoint subsets $S_1,\dots,S_k\subseteq\{1,\dots,n\}$ of total cardinality $m$ and select centers $c_1,\dots,c_k\in\mathbb R^d$ to minimize the value of
$$\sum_{i=1}^k\sum_{j\in S_i}f_{ij}\,dist(x_j,c_i)^{g_{ij}}.$$
\end{prob}

\begin{prob}\label{pr2}
The same as in Problem~\ref{pr1} except that each subset $S_i$ is required to have its own given cardinality $m_i$, $i=1,\dots,k$.
\end{prob}

\begin{fact}{\rm\citep{Shen2019}}\label{f1}
If the values of $k$ and $g$ are fixed, Problems {\rm\ref{pr1}} and~{\rm\ref{pr2}} admit deterministic approximation schemes PTAS computing $(1+\varepsilon)^g$-approximate solutions to these problems in time $O(N(n,\varepsilon)^knkd\,)$ and $\;O(N(n,\varepsilon)^k(n^3+nkd\,))$, respectively.
\end{fact}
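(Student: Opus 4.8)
\medskip
\noindent\textbf{Proof idea.}
The plan is to combine a brute-force search over $k$-tuples of candidate centers drawn from a small $(1+\varepsilon)$-collection with an efficient subroutine that, for a fixed tuple of centers, computes the best admissible choice of the subsets $S_1,\dots,S_k$. First I would invoke the algorithm of \citep{Shen2019,Shen2020} to compute a $(1+\varepsilon)$-collection $Y$ for the input set $\{x_1,\dots,x_n\}$, of cardinality $N(n,\varepsilon)$ and produced in time $O(N(n,\varepsilon)\,d\,)$. Let $c_1^*,\dots,c_k^*$ together with optimal subsets $S_1^*,\dots,S_k^*$ form an optimal solution of Problem~\ref{pr1} (resp.\ Problem~\ref{pr2}). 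Applying the defining property of $Y$ separately to each $c_i^*$ produces a tuple $c_1',\dots,c_k'\in Y$ with $dist(x_j,c_i')\le(1+\varepsilon)\,dist(x_j,c_i^*)$ for all $i$ and $j$. Keeping the subsets $S_i^*$ unchanged is still feasible, since the cardinality constraints do not involve the centers, and because $g_{ij}\in[0,g]$ and $f_{ij}\ge 0$ the objective value of this solution is at most
$$\sum_{i=1}^k\sum_{j\in S_i^*}f_{ij}\,(1+\varepsilon)^{g_{ij}}\,dist(x_j,c_i^*)^{g_{ij}}\le(1+\varepsilon)^g\sum_{i=1}^k\sum_{j\in S_i^*}f_{ij}\,dist(x_j,c_i^*)^{g_{ij}}.$$
Thus it suffices to compute, over all tuples in $Y^k$, the best solution of the corresponding discrete subproblem with the centers frozen.

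Next I would enumerate all $N(n,\varepsilon)^k$ tuples $(c_1,\dots,c_k)\in Y^k$ and, for each, form the cost array $w_{ij}=f_{ij}\,dist(x_j,c_i)^{g_{ij}}$ in $O(nkd)$ time, each distance in $\mathbb R^d$ costing $O(d)$. For Problem~\ref{pr1}, selecting disjoint $S_i$ of total size $m$ minimizing $\sum_i\sum_{j\in S_i}w_{ij}$ amounts to assigning every chosen index $j$ to a center attaining $w_j:=\min_i w_{ij}$ and then retaining the $m$ indices with the smallest $w_j$; this is an $O(n)$ selection, so the overall time is $O(N(n,\varepsilon)^k\,nkd)$. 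For Problem~\ref{pr2}, where each $|S_i|$ is fixed to $m_i$, the subproblem is a min-cost transportation problem --- unit supply at each point, demand $m_i$ at center $i$ with arc cost $w_{ij}$, and a zero-cost dummy node absorbing the remaining points --- whose constraint matrix is totally unimodular, so a combinatorial min-cost-flow routine returns an exact integral optimum in $O(n^3)$ time, giving overall $O(N(n,\varepsilon)^k(n^3+nkd))$. Returning the best solution over all tuples, and noting that $N(n,\varepsilon)^k$ is polynomial in $n$ for fixed $k$ and $\varepsilon$, yields the two schemes with the stated running times and approximation factor $(1+\varepsilon)^g$.

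The step I expect to be the real obstacle is the subset-selection subroutine, not the center-guessing: one must observe that Problem~\ref{pr2} reduces to a transportation problem with an integral polytope, so that a polynomial combinatorial algorithm produces the \emph{exact} optimal partition for the guessed centers, and one must check that the greedy ``assign to the cheapest center, keep the $m$ cheapest points'' rule is genuinely optimal for Problem~\ref{pr1}. Everything else --- pushing the factor $(1+\varepsilon)^g$ through using $g_{ij}\le g$, and tallying the $O(d)$ per-distance cost in the time bounds --- is routine.
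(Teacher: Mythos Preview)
The paper does not actually prove Fact~\ref{f1}: it is stated as a cited result from \citep{Shen2019}, with only the remark that ``Fact~\ref{f1} was proved for the case of high-dimensional Euclidean space but it can be easily extended to any metric space which admits a polynomial upper bound $N(n,\varepsilon)$ for the size of $(1+\varepsilon)$-collections.'' So there is no in-paper proof to compare against.

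That said, your sketch is correct and is precisely the natural argument one would expect in the cited source: build the $(1+\varepsilon)$-collection, observe via the monotonicity $t\mapsto t^{g_{ij}}$ and $g_{ij}\le g$ that replacing optimal centers by their collection approximants loses at most a factor $(1+\varepsilon)^g$, then enumerate $Y^k$ and solve the fixed-center assignment subproblem exactly. Your treatment of Problem~\ref{pr1} (assign each $j$ to $\arg\min_i w_{ij}$, keep the $m$ cheapest) and of Problem~\ref{pr2} (reduce to a transportation instance with integral data and solve in $O(n^3)$, e.g.\ by expanding the capacitated sinks and running the Hungarian algorithm) is sound, and the time accounting matches the stated bounds. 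There is no gap.
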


Fact~\ref{f1} was proved for the case of high-dimensional Euclidean space but it can be easily extended to any metric space which admits a polynomial upper bound $N(n,\varepsilon)$ for the size of $(1+\varepsilon)$-collections.

It should be noted that not every metric induced by a vector norm admits polynomial $(1+\varepsilon)$-collections in high dimensions.
In particular, in the case of Chebyshev's norm $\ell_\infty$, the size of $(1+\varepsilon)$-collections for some $n$-element sets in space $\mathbb R^{n/2}$ can not be less than $2^{n/2}$ if $\varepsilon<1$ \citep{Shen2020}.
In this regard, an interesting question is the cardinality of $(1+\varepsilon)$-collections in fixed and logarithmic dimensions.\medskip

\noindent\textbf{Our contributions.}
Obviously, the efficiency of a $(1+\varepsilon)$-approximate centers collection depends on its size and the time required to calculate it.
In this paper, we study the question of the existence and computability of small $(1+\varepsilon)$-collections in low-dimensional metric spaces.

We prove that, for any $\varepsilon\in(0,1]$, every set of $n$ points in space $\mathbb R^d$ with the metric induced by arbitrary vector norm admits a $(1+\varepsilon)$-collection of cardinality $N(n,\varepsilon)=(2/\varepsilon)^{O(d)}n$, which can be constructed by a randomized algorithm in expected time $2^{O(d)}n\ln n+(2/\varepsilon)^{O(d)}n$.
For the special case when the metric is Euclidean, a deterministic algorithm is proposed which constructs a $(1+\varepsilon)$-collection of cardinality
$N(n,\varepsilon)=2^{O(d)}(1/\varepsilon)^{2d}\ln(2/\varepsilon)\,n$
in time $O(n\ln n)+2^{O(d)}(1/\varepsilon)^{2d}\ln(2/\varepsilon)\,n$.
The suggested algorithms are based on geometric properties of $(1+\varepsilon)$-collections as well as on known facts on coverings of convex bodies \citep{VempalaExt,VempalaSTOC} and well-separated pair decompositions in the general and Euclidean cases \citep{Mendel,CK}.

The obtained results are partially extended to the case of any metric space $({\cal M},dist)$ of doubling dimension $dim$.
We prove that every set of $n$ points in this space admits a $(1+\varepsilon)$-collection of cardinality $N(n,\varepsilon)=(2/\varepsilon)^{O(dim)}n$.
If an oracle can be specified which, in time $2^{O(dim)}$, returns a covering of arbitrary ball in the space $({\cal M},dist)$ by $2^{O(dim)}$ balls of half the radius, then this $(1+\varepsilon)$-collection can be computed by a ran\-do\-mized algorithm in expected time $2^{O(dim)}n\ln n+(2/\varepsilon)^{O(dim)}n$.

Thus, we state that, for any fixed $\varepsilon\in(0,1]$, every set of $n$ points in a metric space of fixed doubling dimension admits a linear-size $(1+\varepsilon)$-approximate centers collection, moreover, at least in the case when the metric is induced by a vector norm in fixed-dimensional space $\mathbb R^d$, such a $(1+\varepsilon)$-collection can be computed in almost linear time.
Note that the proposed algorithms remain to be polynomial not only in fixed but also in logarithmic dimensions, i.e., when $d=O(\ln n)$.
In this case, a $(1+\varepsilon)$-collection of size $n^{O(\ln(2/\varepsilon))}$ can be computed in time $n^{O(\ln(2/\varepsilon))}$.

To be objective, the best known techniques for some classical center-based problems in Euclidean space, such as $k$-Means, Geometric $k$-Median, Continuous Facility Location, give smaller sets of approximate centers \citep[e.g., see][]{Matousek,Mazumdar,KumarA2010,Meira}.
But an advantage of $(1+\varepsilon)$-approximate centers collections is that they do not depend on any specific problem:
they approximate optimal centers not by the objective function values but by the distances from the candidate centers to every input point.
So they contain approximate centers at once for all the objective functions which have a continuity-type dependence on these distances.
As a result, it immediately gives approximation algorithms for a wide range of geometric center-based problems, including those which cannot be approximated by the known techniques.
Examples are Problems \ref{pr1} and~\ref{pr2}, for which the proposed framework gives approximation schemes \mbox{FPTAS} in the case of fixed parameters $k,g,d$.
Another advantage of our approach is that the algorithms we suggest construct small $(1+\varepsilon)$-collections for any metric, not only Euclidean, which further expands the scope of this instrument.

\section{Basic definitions and properties}
First, we give basic definitions and properties related to $\alpha$-approximate centers collections which underly the suggested algorithms.
Let $X$ be arbitrary set of $n$ points in any metric space $({\cal M},dist)$.
It is assumed that the distance function $dist$ satisfies the triangle inequality and the symmetry axiom.

\begin{definition*}
Given a real number $\alpha\ge 1$ and a point $p\in{\cal M}$, an \emph{$\alpha$-approximation of $p$} with respect to the set $X$ in the metric $dist$ is any point $p'\in{\cal M}$ such that $dist(x,p')\le\alpha\,dist(x,p)$ for all $x\in X$.
\end{definition*}

\begin{definition*}
Given a real number $\alpha\ge 1$, an \emph{$\alpha$-approximate centers collection} or, shortly, an \emph{$\alpha$-collection} for the set $X$ in the metric space $({\cal M},dist)$ is a subset of ${\cal M}$ which contains $\alpha$-approximations of all the points of ${\cal M}$ with respect to the set $X$ in the metric $dist$.
\end{definition*}

In what follows, we will omit the words ``in the metric $dist$'' and ``in the metric space $({\cal M},dist)$'' since both metric and metric space will always be clear by context.

\begin{example*}
Any finite set $X\subseteq{\cal M}$ is a $2$-collection for itself.
Indeed, let $p\in{\cal M}$ and $p'$ be a point of $X$ nearest to~$p$.
Then, by the triangle inequality, the choice of $p'$, and the symmetry axiom, we have
$$dist(x,p')\le dist(x,p)+dist(p,p')\le dist(x,p)+dist(p,x)=2\,dist(x,p)$$
for all $x\in X$.
So $p'$ is a $2$-approximation of $p$ with respect to~$X$.
\end{example*}

Our further goal is constructing $\alpha$-collections in the case of smaller values of $\alpha$, when $\alpha=1+\varepsilon$ for $\varepsilon\in(0,1)$.

Given a point $x\in{\cal M}$ and a real number $r>0$, denote by $B(x,r)$ the ball of radius $r$ in the space $({\cal M},dist)$ centered at $x$:
$B(x,r)=\{p\in{\cal M}\,|\,dist(x,p)\le r\}$.
Next, given $x_1,x_2\in X$ and $\varepsilon>0$, define the symmetric lens
$$L_\varepsilon(x_1,x_2)=B(x_1,r)\cap B(x_2,r),\mbox{ where }r=dist(x_1,x_2)/(1+\varepsilon).$$

\begin{lemma}\label{lem1}
Suppose that $p$ is any point of ${\cal M}$, $x$ is an element of $X$ nearest to $p$, and $p$ does not belong to any lens $L_\varepsilon(x,y)$, $y\in X$.
Then $x$ is a $(1+\varepsilon)$-approximation of $p$ with respect to~$X$.
\end{lemma}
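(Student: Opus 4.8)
The plan is to argue by contradiction: suppose $x$ is \emph{not} a $(1+\varepsilon)$-approximation of $p$ with respect to $X$. Then there must exist some witness $y \in X$ for which the approximation inequality fails, i.e. $dist(y,x) > (1+\varepsilon)\,dist(y,p)$. The goal is to show that this forces $p$ to lie inside the lens $L_\varepsilon(x,y)$, contradicting the hypothesis. So the whole proof reduces to: (failure of approximation at $y$) $\Rightarrow$ ($p \in B(x,r) \cap B(y,r)$ with $r = dist(x,y)/(1+\varepsilon)$).

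First I would handle the membership $p \in B(y,r)$. From the failure inequality we directly get $dist(y,p) < dist(y,x)/(1+\varepsilon) = r$, so $p \in B(y,r)$ immediately — no triangle inequality needed here. Second, I would handle $p \in B(x,r)$. Here I use that $x$ is an element of $X$ nearest to $p$, so in particular $dist(x,p) \le dist(y,p)$ (since $y \in X$ as well). Combining with the bound just obtained, $dist(x,p) \le dist(y,p) < r$, giving $p \in B(x,r)$. Hence $p \in L_\varepsilon(x,y)$, the desired contradiction. (One should also note the degenerate case where the nearest point $x$ equals the witness $y$: then $dist(y,x)=0$ and the failure inequality $0 > (1+\varepsilon)\,dist(y,p) \ge 0$ is impossible, so no such witness exists and $x$ is trivially a valid approximation; I would dispatch this case at the start or simply observe $y \ne x$.)

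I expect there to be essentially no serious obstacle — the lemma is a short geometric observation, and the only thing to be careful about is making sure every distance comparison uses a genuinely valid fact: the defining inequality of the lens, the "nearest point" property of $x$, and symmetry of $dist$ (which lets me freely write $dist(x,p)=dist(p,x)$ etc.). The one subtlety worth stating explicitly is \emph{why} the negation of "$x$ is a $(1+\varepsilon)$-approximation" yields a witness $y$ that is an element of $X$ rather than an arbitrary point of $\mathcal M$: this is immediate from the definition of $\alpha$-approximation, which only quantifies over $x' \in X$. Once that is in place, the chain $dist(x,p) \le dist(y,p) < dist(x,y)/(1+\varepsilon)$ closes the argument in two lines.
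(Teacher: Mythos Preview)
Your proof is correct and uses essentially the same idea as the paper's: both rely on the nearest-point inequality $dist(x,p)\le dist(y,p)$ together with the lens condition to force $dist(y,p)>dist(x,y)/(1+\varepsilon)$. The only cosmetic difference is that the paper argues directly for every $y\in X$, while you frame it as a contradiction; the underlying chain of inequalities is identical.
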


\begin{proof}
Let $y$ be any element of $X$ and $r=dist(x,y)/(1+\varepsilon)$.
Then, by the condition, the point $p$ does not belong to at least one of the balls $B(x,r)$ and $B(y,r)$.
But $dist(x,p)\le dist(y,p)$, so $p\notin B(y,r)$.
It follows that $dist(y,p)>dist(x,y)/(1+\varepsilon)$ or, equivalently, $dist(y,x)<(1+\varepsilon)\,dist(y,p)$.
Thus, $x$ is a $(1+\varepsilon)$-approximation of $p$ with respect to~$X$.
The lemma is proved.
\hfill$\Box$
\end{proof}

\begin{lemma}\label{lem2}
Suppose that $p,p'\in{\cal M}$ and $dist(p,p')\le\varepsilon\,dist(x,p)$, where $x$ is a point of $X$ nearest to $p$.
Then $p'$ is a $(1+\varepsilon)$-approximation of $p$ with respect to~$X$.
\end{lemma}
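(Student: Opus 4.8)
The plan is to estimate $dist(y,p')$ for an arbitrary $y\in X$ directly via the triangle inequality, using $p$ as an intermediate point, and then compare with $dist(y,p)$. Writing $dist(y,p')\le dist(y,p)+dist(p,p')$ and substituting the hypothesis $dist(p,p')\le\varepsilon\,dist(x,p)$, we obtain $dist(y,p')\le dist(y,p)+\varepsilon\,dist(x,p)$.

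The key observation is that $x$ is a nearest point of $X$ to $p$, so $dist(x,p)\le dist(y,p)$. Substituting this into the previous bound yields
$$dist(y,p')\le dist(y,p)+\varepsilon\,dist(y,p)=(1+\varepsilon)\,dist(y,p).$$
Since $y\in X$ was arbitrary, this shows by definition that $p'$ is a $(1+\varepsilon)$-approximation of $p$ with respect to $X$.

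There is essentially no obstacle here: the argument is a one-line application of the triangle inequality combined with the defining property of the nearest point $x$. The only thing to be slightly careful about is that the bound $dist(x,p)\le dist(y,p)$ must be used for \emph{every} $y\in X$ (including degenerate cases such as $y=x$ or $p\in X$), but this is immediate from the choice of $x$ as a nearest element and holds uniformly. No case analysis or geometric input beyond the metric axioms is needed, in contrast to Lemma~\ref{lem1} which used the lens structure.
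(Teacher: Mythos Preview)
Your proof is correct and matches the paper's argument essentially line for line: both apply the triangle inequality $dist(y,p')\le dist(y,p)+dist(p,p')$, use the hypothesis $dist(p,p')\le\varepsilon\,dist(x,p)$, and then invoke the nearest-point property $dist(x,p)\le dist(y,p)$ to conclude. The only difference is that the paper compresses the chain into a single displayed inequality, whereas you spell out the nearest-point step separately.
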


\begin{proof}
Let $y$ be any point of $X$.
Then, by the triangle inequality and the choice of $x$ and $p'$, we have
$$dist(y,p')\le dist(y,p)+dist(p,p')\le dist(y,p)+\varepsilon\,dist(x,p)\le(1+\varepsilon)\,dist(y,p).$$
The lemma is proved.
\hfill$\Box$
\end{proof}

Given a set $S\subseteq X$ and a point $p\in{\cal M}$, define the value $\displaystyle dist(S,p)=\min_{s\in S}dist(s,p)$.

\begin{lemma}\label{lem3}
Suppose that, for some finite set $C\subseteq{\cal M}$, every element $p$ of each lens $L_\varepsilon(x_1,x_2)$, $x_1,x_2\in X$, lies at distance at most $\varepsilon\,dist(\{x_1,x_2\},p)$ from $C$.
Then $X\cup C$ is a $(1+\varepsilon)$-collection for~$X$.
\end{lemma}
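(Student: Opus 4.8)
The plan is to verify the defining property of a $(1+\varepsilon)$-collection pointwise, splitting on whether the nearest input point already serves as an approximation. First I would fix an arbitrary $p\in{\cal M}$ and let $x$ be an element of $X$ nearest to $p$. By Lemma~\ref{lem1}, if $p$ belongs to no lens $L_\varepsilon(x,y)$, $y\in X$, then $x$ is itself a $(1+\varepsilon)$-approximation of $p$ with respect to $X$; since $x\in X\subseteq X\cup C$, this case is finished immediately.

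In the complementary case $p$ lies in some lens $L_\varepsilon(x,y)$ with $y\in X$ (by the contrapositive of Lemma~\ref{lem1}, this is forced whenever $x$ fails to be a $(1+\varepsilon)$-approximation of $p$, since $x$ was chosen nearest). Then I would invoke the hypothesis of the lemma with $x_1=x$, $x_2=y$: it provides a point $p'\in C$ with $dist(p,p')\le\varepsilon\,dist(\{x,y\},p)$. The one step that needs an argument is the observation that $dist(\{x,y\},p)=\min\{dist(x,p),dist(y,p)\}=dist(x,p)$, because $x$ is a nearest element of $X$ to $p$; hence $dist(p,p')\le\varepsilon\,dist(x,p)$. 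This is exactly the premise of Lemma~\ref{lem2}, which then yields that $p'$ is a $(1+\varepsilon)$-approximation of $p$ with respect to $X$, and $p'\in C\subseteq X\cup C$.

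Since these two cases are exhaustive, every point of ${\cal M}$ admits a $(1+\varepsilon)$-approximation with respect to $X$ inside the finite set $X\cup C$, which is the claim. I do not anticipate a genuine obstacle: the whole content of the lemma is the bookkeeping remark that choosing the lens through a nearest input point makes the set-distance $dist(\{x_1,x_2\},p)$ appearing in the hypothesis collapse to $dist(x,p)$, after which Lemmas~\ref{lem1} and~\ref{lem2} apply verbatim.
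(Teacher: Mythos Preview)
Your proof is correct and follows essentially the same route as the paper's: fix a nearest input point $x$, apply Lemma~\ref{lem1} when $p$ lies in no lens $L_\varepsilon(x,y)$, and otherwise use the hypothesis together with the observation $dist(\{x,y\},p)=dist(x,p)$ to invoke Lemma~\ref{lem2}. You are in fact slightly more explicit than the paper about why the set-distance collapses to $dist(x,p)$, which is a nice touch.
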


\begin{proof}
Let $u$ be any point of ${\cal M}$ and $x$ be an element of $X$ nearest to~$u$.
If $u$ does not belong to any lens $L_\varepsilon(x,y)$, $y\in X$, then $x$ is a $(1+\varepsilon)$-approximation of $u$ by Lemma~\ref{lem1}.
Suppose that $u$ belongs to at least one of such lenses and $u'$ is a point of $C$ nearest to~$u$.
Then, by the condition and the choice of $x$, we have $dist(u,u')\le\varepsilon\,dist(x,u)$.
By Lemma~\ref{lem2}, it follows that $u'$ is a $(1+\varepsilon)$-approximation of $u$ with respect to~$X$.
Thus, the set $X\cup C$ is a $(1+\varepsilon)$-collection for $X$.
The lemma is proved.
\hfill$\Box$
\end{proof}

\section{Quadratic-size $(1+\varepsilon)$-collections}
Lemma~\ref{lem3} prompts an idea how to construct a $(1+\varepsilon)$-collection for the given set $X$:
If we cover each lens $L_\varepsilon(x_1,x_2)$, $x_1,x_2\in X$, by sufficiently small balls with radii proportional to the distances from their centers to the points $x_1$ and $x_2$, then the centers of these balls, supplemented by the elements of $X$, form a $(1+\varepsilon)$-collection for~$X$.
Such a $(1+\varepsilon)$-collection will be of size $O(n^2)$ if the used covering of each lens is of size $O(1)$ for fixed $\varepsilon$.
Later (see Sect.~4), we will suggest a method to reduce this construction to a linear-size one.

\subsection{Covering the lenses $L_\varepsilon(x_1,x_2)$}
Here, we describe a very simple scheme of covering the lenses $L_\varepsilon(x_1,x_2)$ which allows to satisfy the condition of Lemma~\ref{lem3}.
First, for each $x_1,x_2\in X$, we define the values
$$\textstyle d_i(x_1,x_2)=dist(x_1,x_2)\frac{\varepsilon^{1-i/I}}{1+\varepsilon},$$
where $I$ is some positive integer parameter and $i=0,\dots,I$.
These values approximate the possible distances from $x_1$ and $x_2$ to the points of $L_\varepsilon(x_1,x_2)$.
Then, for each $x_j$, $j=1,2$, we consider the balls $B(x_j,d_i(x_1,x_2))$, $i=1,\dots,I$ (see Fig.~\ref{fig:1}), and cover each of them by balls of the proportional radius $\delta d_i(x_1,x_2)$, where $\delta=\varepsilon^{1+1/I}$.

\begin{figure}[!ht]
\centering
\captionsetup{justification=centering}
\includegraphics[scale=0.9]{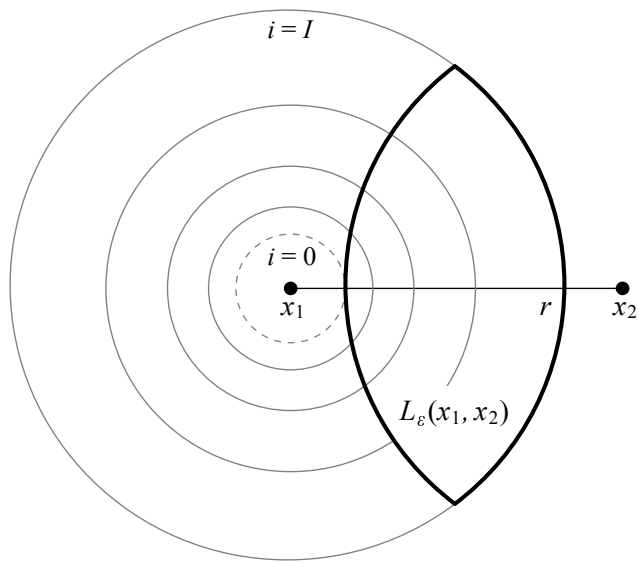}
\caption{The lens $L_\varepsilon(x_1,x_2)$ and the balls $B(x_1,d_i(x_1,x_2))$}
\label{fig:1}
\end{figure}

\begin{lemma}\label{lem4}
Suppose that a finite set $C\subseteq{\cal M}$ satisfies the property
$$B(x_1,\,d_i(x_1,x_2))\subseteq\bigcup_{c\in C}B(c,\,\delta d_i(x_1,x_2))$$
for each $x_1,x_2\in X$ and\/ $i=1,\dots,I$.
Then $X\cup C$ is a $(1+\varepsilon)$-collection for~$X$.
\end{lemma}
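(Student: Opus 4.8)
The plan is to verify the hypothesis of Lemma~\ref{lem3} using the covering property assumed here. Fix $x_1,x_2\in X$ and an arbitrary point $p\in L_\varepsilon(x_1,x_2)$; without loss of generality assume $dist(x_1,p)\le dist(x_2,p)$, so that $dist(\{x_1,x_2\},p)=dist(x_1,p)$, and write $D=dist(x_1,x_2)$. By definition of the lens, $dist(x_1,p)\le r=D/(1+\varepsilon)=d_0(x_1,x_2)$. I would first argue a lower bound: since $x_1$ is not necessarily the nearest input point to $p$, I cannot immediately bound $dist(x_1,p)$ from below by anything, so instead I distinguish two cases according to whether $dist(x_1,p)$ is very small or not. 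If $dist(x_1,p)\le d_1(x_1,x_2)$, I will handle it via the $i=1$ ball (or, if it is smaller than all the $d_i$, a direct estimate — see below). Otherwise there is an index $i\in\{1,\dots,I\}$ with $d_i(x_1,x_2)<dist(x_1,p)\le d_{i-1}(x_1,x_2)$ (with $d_0$ playing its role at the top end), because the $d_i$ decrease geometrically from $d_0=r$ down to $d_I=D\varepsilon/(1+\varepsilon)$, covering the whole interval $(\,d_I,\,r\,]$.

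The key step is the following. In the generic case, $p\in B(x_1,d_{i-1}(x_1,x_2))$, and by the assumed covering property (applied with index $i-1$, which lies in $\{1,\dots,I\}$ as long as $i\ge 2$; the boundary index $i=1$ needs the covering at index $1$ together with $p\in B(x_1,d_0)=B(x_1,r)$, which holds by the lens definition — here I would slightly restate the covering hypothesis, or simply note $B(x_1,d_0)\subseteq B(x_1,d_1)\cdot$ is false, so more carefully: use that $p\in B(x_1,r)$ and that the covering at $i=1$ covers $B(x_1,d_1)$, and separately note $d_0/d_1=\varepsilon^{-1/I}$ is a constant factor, so inflating the radius $\delta d_1$ to $\delta d_0$ absorbs this), there is $c\in C$ with $dist(c,p)\le\delta\,d_{i-1}(x_1,x_2)$. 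Now $d_{i-1}(x_1,x_2)=\varepsilon^{-1/I}d_i(x_1,x_2)<\varepsilon^{-1/I}dist(x_1,p)$, so
$$dist(c,p)\le\delta\,\varepsilon^{-1/I}\,dist(x_1,p)=\varepsilon^{1+1/I}\varepsilon^{-1/I}\,dist(x_1,p)=\varepsilon\,dist(x_1,p)=\varepsilon\,dist(\{x_1,x_2\},p),$$
which is exactly what Lemma~\ref{lem3} requires. For the low end, when $dist(x_1,p)\le d_I(x_1,x_2)=D\varepsilon/(1+\varepsilon)$, I would instead use the covering of $B(x_1,d_I)$ (index $I$) to get $c\in C$ with $dist(c,p)\le\delta\,d_I\le\varepsilon\,d_I$; but here I need $\varepsilon\,d_I\le\varepsilon\,dist(x_1,p)$, which fails for tiny $dist(x_1,p)$ — so this degenerate sub-case is the one place to be careful, and the natural fix is to observe that the inner balls of radius $\delta d_I$ themselves form a covering that can be recursed, i.e. one keeps halving; equivalently, one should pick $I$ and the construction so that balls of radius $\delta d_I$ are nested covers all the way down, or simply note that points $p$ with $dist(x_1,p)$ arbitrarily small are $(1+\varepsilon)$-approximated by $x_1$ itself via Lemma~\ref{lem2} applied with $p'=x_1$ when $x_1$ is the nearest point — but $x_1$ need not be nearest. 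This is the main obstacle.

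To resolve it cleanly I would argue: in the context where the lens matters, Lemma~\ref{lem3}'s condition only needs to be checked for $p$ that we actually use in the proof of Lemma~\ref{lem3}, namely $p$ for which some input point $x$ nearest to $p$ satisfies $p\in L_\varepsilon(x,y)$. Tracing back, in that application $dist(\{x_1,x_2\},p)$ can be taken as $dist(x,p)$ with $x$ a true nearest neighbour, and then the quantity $\varepsilon\,dist(x,p)$ is never pathologically small relative to the geometry, because $x_1=x$ forces $dist(x_1,p)=dist(X,p)$. Alternatively, and more simply, one enlarges $C$ to also contain the points of $X$ (it already does: the collection is $X\cup C$), so for $p$ with $dist(x_1,p)$ smaller than all $d_i$ we just take $c=x_1\in X\cup C$ and get $dist(c,p)=dist(x_1,p)\le\varepsilon\,dist(x_1,p)$ only if $dist(x_1,p)=0$; the correct statement is that for such $p$ we don't need $c$ close in the $\varepsilon\,dist$ sense because $dist(x_1,p)$ being the smallest means $x_1$ is already optimal — this is precisely Lemma~\ref{lem1}'s regime. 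I would therefore structure the final write-up as: (i) quote Lemma~\ref{lem3}; (ii) reduce, via the decreasing geometric sequence $d_i$, checking its hypothesis to the covering property assumed in Lemma~\ref{lem4}; (iii) do the one-line radius computation $\delta\varepsilon^{-1/I}=\varepsilon$ above; (iv) dispatch the very-small-distance corner case by the observation that it never arises in Lemma~\ref{lem3}'s proof, or by absorbing it into the $i=1$ band after noting $d_0/d_1$ is a bounded constant. Steps (i)–(iii) are routine; step (iv), pinning down the degenerate band, is where the real care goes.
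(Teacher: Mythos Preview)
Your proposal has two related problems, one computational and one conceptual.

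First, you have the monotonicity of the $d_i$ backwards. By definition $d_i(x_1,x_2)=dist(x_1,x_2)\,\varepsilon^{1-i/I}/(1+\varepsilon)$, so $d_0=D\varepsilon/(1+\varepsilon)$ and $d_I=D/(1+\varepsilon)=r$; since $\varepsilon<1$, the sequence \emph{increases} in $i$. Your write-up asserts the opposite (``$d_0=r$ \dots\ down to $d_I=D\varepsilon/(1+\varepsilon)$''), and this reversal is what sends you into the tangled case analysis at the boundary.

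Second, and this is the real gap: you say you ``cannot immediately bound $dist(x_1,p)$ from below by anything,'' but in fact you can, and this is the whole point. Because $p$ lies in the lens $L_\varepsilon(x_1,x_2)$, you have $dist(x_2,p)\le r$; combining this with the triangle inequality gives
\[
dist(x_1,x_2)\le dist(x_1,p)+dist(x_2,p)\le dist(x_1,p)+r,
\]
hence $dist(x_1,p)\ge dist(x_1,x_2)-r=D\varepsilon/(1+\varepsilon)=d_0(x_1,x_2)$. So automatically $d_0\le dist(x_1,p)\le d_I$, and there is an index $i\in\{1,\dots,I\}$ with $d_{i-1}\le dist(x_1,p)\le d_i$. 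Now $p\in B(x_1,d_i)$, the covering hypothesis at index $i$ gives $c\in C$ with $dist(p,c)\le\delta d_i$, and since $d_i=d_{i-1}\varepsilon^{-1/I}\le dist(x_1,p)\,\varepsilon^{-1/I}$ you get $dist(p,c)\le\delta\varepsilon^{-1/I}dist(x_1,p)=\varepsilon\,dist(\{x_1,x_2\},p)$. Lemma~\ref{lem3} then finishes. The ``very-small-distance'' sub-case you worry about simply does not occur: no point of the lens is closer than $d_0$ to its nearer endpoint. Your step (iv) and all the attempted fixes for it are unnecessary once you use this lower bound.
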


\begin{proof}
To check that the condition of Lemma~\ref{lem3} is satisfied, we consider arbitrary point $p$ in any lens $L_\varepsilon(x_1,x_2)$, $x_1,x_2\in X$.
We will assume that $dist(x_1,p)\le dist(x_2,p)$:
the opposite case is treated similarly, using $x_2$ instead of $x_1$ and $x_1$ instead of $x_2$.

By the construction of the lens $L_\varepsilon(x_1,x_2)$, we have $dist(x_j,p)\le r$, where $j=1,2$ and $r=dist(x_1,x_2)/(1+\varepsilon)$.
Then
$$dist(x_1,x_2)\le dist(x_1,p)+dist(x_2,p)\le dist(x_1,p)+r.$$
Taking into account the definition of $d_i(x_1,x_2)$, it follows that
$$\textstyle d_0(x_1,x_2)=dist(x_1,x_2)-r\le dist(x_1,p)\le r=d_I(x_1,x_2),$$
which implies the existence of an index $i\in\{1,\dots,I\}$ with the property
$$d_{i-1}(x_1,x_2)\le dist(x_1,p)\le d_i(x_1,x_2).$$
So $d_i(x_1,x_2)=d_{i-1}(x_1,x_2)\,\varepsilon^{-1/I}\le dist(x_1,p)\,\varepsilon^{-1/I}$ and then, by the condition, the distance from $p$ to $C$ is at most
$$\delta d_i(x_1,x_2)\le\delta\varepsilon^{-1/I}dist(x_1,p)=\varepsilon\,dist(x_1,p)=\varepsilon\,dist(\{x_1,x_2\},p).$$
Therefore, by Lemma~\ref{lem3}, the set $X\cup C$ is a $(1+\varepsilon)$-collection for $X$.
The lemma is proved.
\hfill$\Box$
\end{proof}

Thus, we reduce computing a $(1+\varepsilon)$-collection for the set $X$ to covering the balls $B(x_1,d_i(x_1,x_2))$, $x_1,x_2\in X$, $i=1,\dots,I$, by balls of radius $\delta d_i(x_1,x_2)$.

\subsection{Covering the ball $B({\textbf 0},1)$ in a normed space}
Based on Lemma~\ref{lem4}, we will construct a $(1+\varepsilon)$-collection for the given set $X$ in the case when ${\cal M}=\mathbb R^d$ and the given metric $dist$ is induced by arbitrary vector norm $\|.\|$ in space $\mathbb R^d$: $dist(x,y)=\|x-y\|$ for all $x,y\in\mathbb R^d$.
In this case, the metric $dist$ is translation invariant and homogeneous, so covering a ball of radius $d_i(x_1,x_2)$ by balls of radius $\delta d_i(x_1,x_2)$ is reduced to covering the ball $B({\textbf 0},1)$ by translates of the ball $B({\textbf 0},\delta)$, where $\textbf 0$ is the zero vector.

To cover the unit ball $B({\textbf 0},1)$, we will use known coverings of centrally-symmetric convex bodies by its $\sigma$-scaled copies, $\sigma\in(0,1)$.
Obviously, in the case of Chebyshev's norm $\ell_\infty$, the ball $B({\textbf 0},1)$ can be covered by $\lceil 1/\sigma\rceil^d$ its copies.
In the Euclidean case, by using a volume argument, it can be easily shown that the axis-parallel grid
$$\textstyle Gr(d,\sigma)=\big(\frac{2\sigma}{\sqrt{d}}\big)\,\mathbb Z^d\cap B({\bf 0},1+\sigma)$$
gives a covering of size $2^{O(d)}(1/\sigma)^d$ \citep[e.g., see Lemma~7 in][]{KMS}.
Note that, in both these cases, the coverings of the ball $B({\textbf 0},1)$ can be constructed in time $2^{O(d)}(1/\sigma)^d$ and remain to be polynomial even if the space dimension is not fixed but bounded by a ``slowly growing'' value $\Theta(\ln n)$.

An interesting question is whether such a covering exists in the case of any vector norm in space $\mathbb R^d$.
The following facts imply that the answer to this question is ``yes''.

\begin{fact}{\rm\citep{VempalaExt}}\label{f2}
Let $M(A,B)$ be the minimum number of translates of a body $B$ required to cover a body $A$.
Then, for any convex body $A$ and any centrally symmetric convex body $B$ in space $\mathbb R^d$, a covering of $A$ by $M(A,B)\,2^{O(d)}$ translates of $B$ can be constructed by a deterministic algorithm in time $M(A,B)\,2^{O(d)}$.
\end{fact}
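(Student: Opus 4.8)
\medskip\noindent The plan is to prove Fact~\ref{f2} through the classical duality between coverings and packings, with a John-position reduction handling the algorithmic part. Since $B$ is a centrally symmetric convex body, it is the unit ball of a norm $\|\cdot\|_B$, so covering $A$ by translates of $B$ is the same as covering $A$ by $\|\cdot\|_B$-balls of radius $1$. The combinatorial heart of the matter, which I would establish first, is this: if $P\subseteq A$ is \emph{maximal} with respect to the property that $\|p-q\|_B>1$ for all distinct $p,q\in P$, then $\{p+B:p\in P\}$ already covers $A$, and $|P|\le 2^{O(d)}M(A,B)$. The covering property is immediate, since otherwise an uncovered point of $A$ could be added to $P$. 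For the size bound I would fix an optimal covering $c_1+B,\dots,c_m+B$ of $A$ with $m=M(A,B)$: any two points of $P$ lying in one $c_j+B$ are at $\|\cdot\|_B$-distance at most $2$ but more than $1$, so the translates $p+\frac{1}{2}B$ over those points are pairwise disjoint and all lie in $c_j+\frac{3}{2}B$, and a volume comparison leaves at most $3^d$ of them; hence $|P|\le 3^d m$.

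To turn this into a deterministic construction (assuming $A$ and $B$ come with a standard description, say, as explicit polytopes or through membership oracles), I would first apply a linear map $T$ putting $B$ into John position, so that $\mathbb{B}_2\subseteq TB\subseteq\sqrt d\,\mathbb{B}_2$, where $\mathbb{B}_2$ is the Euclidean unit ball; a constant-factor approximation of such a $T$, which is all that is needed, is computable in time polynomial in $d$, and then $\frac{1}{\sqrt d}\|x\|_2\le\|x\|_{TB}\le\|x\|_2$ for all $x$. In the transformed space I would overlay an axis-parallel grid of spacing $\Theta(1/\sqrt d)$ (so each cell has $\|\cdot\|_2$-diameter at most $1/2$), keep one point of $TA$ from every cell that meets $TA$ --- these representatives then form a $1/2$-net of $TA$ in $\|\cdot\|_{TB}$ --- and scan the representatives, inserting one into $P$ exactly when it lies at $\|\cdot\|_{TB}$-distance more than $1/2$ from every point chosen so far. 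By the net property $\{p+TB:p\in P\}$ covers $TA$, hence $\{T^{-1}p+B:p\in P\}$ covers $A$; since $P$ is $1/2$-separated in $\|\cdot\|_{TB}$, the volume argument of the previous paragraph (with the constant $5^d$ replacing $3^d$) gives $|P|\le 2^{O(d)}M(A,B)$; and a bucketing data structure on the grid makes each separation test touch only $2^{O(d)}$ of the already chosen points.

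The step I expect to be the real obstacle is getting the running time down to the stated $M(A,B)\,2^{O(d)}$ rather than a weaker bound carrying an extra dimension-dependent factor. Because John's slack forces the candidate grid to be $\Theta(1/\sqrt d)$-fine, a region of Euclidean volume $2^{O(d)}$ already contains $d^{\Theta(d)}$ cells, so a plain scan of all candidates --- though it still outputs only $2^{O(d)}M(A,B)$ balls --- would cost time $M(A,B)\,d^{O(d)}$. Removing this spurious factor requires generating candidate centers adaptively, looking only inside the still-uncovered part $TA\setminus\bigcup_{p\in P}(p+TB)$, or a recursive scheme exploiting the John decomposition of $B$; arranging the candidate generation so that only $M(A,B)\,2^{O(d)}$ points are ever inspected is precisely the technical work done in \citep{VempalaExt}, and once this is in place the estimates above yield both the claimed size of the covering and the claimed time bound.
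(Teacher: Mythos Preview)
The paper does not prove Fact~\ref{f2} at all: it is stated as a cited result from \citep{VempalaExt} and used as a black box, so there is no ``paper's own proof'' to compare your sketch against. Your outline --- packing/covering duality, the $3^d$ volume bound, John-position reduction, and greedy net construction --- is a sensible reconstruction of the ideas behind that external result, and you yourself correctly identify that the nontrivial algorithmic content (avoiding the extra $d^{O(d)}$ factor from the na\"{\i}ve grid) is precisely what is done in the cited reference rather than in the present paper.
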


\begin{fact}{\rm\citep[Folklore, see][]{VempalaSTOC}}\label{f3}
For any $\sigma\in(0,1)$ and any convex body $A$ in space $\mathbb R^d$, we have $M(A,\sigma A)\le(1+2/\sigma)^d$, where $\sigma A$ is a $\sigma$-scaled copy of $A$.
\end{fact}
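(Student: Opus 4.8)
\medskip
\noindent\textbf{Proof proposal.}
The plan is to carry out the classical volumetric packing argument. I will describe it first for a centrally symmetric $A$, which is the only case actually needed here, since the unit ball $B(\mathbf 0,1)$ of any vector norm is centrally symmetric. Fix $\sigma\in(0,1)$ and choose a maximal subset $\mathcal N=\{x_1,\dots,x_N\}$ of $A$ for which the translates $x_1+(\sigma/2)A,\dots,x_N+(\sigma/2)A$ are pairwise disjoint; such a maximal family exists, and its finiteness will fall out of the volume estimate below, so the argument is self-consistent.

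First I would check that the translates $x_1+\sigma A,\dots,x_N+\sigma A$ already cover $A$, which gives $M(A,\sigma A)\le N$. If some $y\in A$ were missed by all of them, then $y$ could be adjoined to $\mathcal N$ without destroying disjointness: an intersection $\big(y+(\sigma/2)A\big)\cap\big(x_i+(\sigma/2)A\big)\ne\emptyset$ for some $i$ would force $y-x_i\in(\sigma/2)(A-A)$, and for a centrally symmetric $A$ one has $(\sigma/2)(A-A)=(\sigma/2)(A+A)=\sigma A$, so that $y\in x_i+\sigma A$, contrary to the choice of $y$. Hence $\mathcal N\cup\{y\}$ would again be admissible, contradicting the maximality of $\mathcal N$.

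Next I would bound $N$ by comparing volumes. The $N$ bodies $x_i+(\sigma/2)A$ are pairwise disjoint by construction, and each lies in $A+(\sigma/2)A=(1+\sigma/2)A$ because $x_i\in A$ and $A$ is convex. Hence
$$N\,(\sigma/2)^{d}\,\mathrm{vol}(A)=\sum_{i=1}^{N}\mathrm{vol}\big(x_i+(\sigma/2)A\big)\le\mathrm{vol}\big((1+\sigma/2)A\big)=(1+\sigma/2)^{d}\,\mathrm{vol}(A),$$
so $N\le(1+\sigma/2)^{d}/(\sigma/2)^{d}=(1+2/\sigma)^{d}$, which is the asserted bound; this also shows a posteriori that $N<\infty$.

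The only delicate point is the identity $(\sigma/2)(A-A)=\sigma A$ used in the covering step: this is precisely where central symmetry is used, and I expect extending the statement to an arbitrary convex body $A$ to be the main obstacle. In that generality one repeats the packing argument with the centrally symmetric difference body $\frac12(A-A)$ in place of $A$ in the disjointness condition, while tracking the inclusions $A\subseteq a_0+(A-A)$ for a fixed $a_0\in A$ and $\frac12(A-A)\subseteq O(1)\cdot(A-a_1)$; this already yields a cover of $A$ by $(O(1)/\sigma)^{d}$ translates of $\sigma A$, and sharpening the constant to the exact value $(1+2/\sigma)^{d}$ is the folklore computation referenced above.
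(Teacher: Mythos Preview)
The paper does not actually prove Fact~3: it is stated as folklore with a pointer to the cited reference, and is then used as a black box. So there is no ``paper's own proof'' to compare against.

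On the merits, your argument for the centrally symmetric case is the standard volumetric packing proof and is correct as written. You also correctly observe that this is the only case the paper needs, since the body in question is the unit ball $B(\mathbf 0,1)$ of a norm. Two small points worth making explicit: the identity $A+A=2A$ you use relies on convexity (so that $\lambda A+\mu A=(\lambda+\mu)A$ for $\lambda,\mu\ge 0$), and the existence of a maximal packing is justified exactly as you indicate, by first proving the uniform volume bound on any admissible $\mathcal N$ and then extending greedily.

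Your final paragraph on the non-symmetric case is honest but a bit loose. The inclusion $A\subseteq a_0+(A-A)$ is fine, but ``$\tfrac12(A-A)\subseteq O(1)\cdot(A-a_1)$'' is doing real work and is not innocent: making that precise is essentially the Rogers--Shephard inequality, and plugging it in naively gives only $(C/\sigma)^d$ with a dimension-independent constant $C>1$, not the exact $(1+2/\sigma)^d$. Since the paper only invokes Fact~3 for centrally symmetric $A$ (and, elsewhere, even uses it just at $\sigma=1/2$ to bound the doubling dimension of a normed space), you are safe simply stating the centrally symmetric version you proved and noting that the full statement for arbitrary convex $A$ is quoted from the reference.
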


Thus, for any vector norm, a covering of the ball $B({\bf 0},1)$ by $2^{O(d)}(1/\sigma)^d$ translates of the ball $B({\bf 0},\sigma)$ can be constructed by a deterministic $2^{O(d)}(1/\sigma)^d$-time algorithm.

\subsection{Constructing a quadratic-size $(1+\varepsilon)$-collection in a normed space}
Let us summarize the above observations in the form of an algorithm for constructing $(1+\varepsilon)$-collections in space $\mathbb R^d$.
If $\varepsilon\ge 1$, then the set $X$ is a $(1+\varepsilon)$-collection for itself as follows from Example in Sect.~2.
Further, we assume that $\varepsilon<1$.

First, we run the deterministic algorithm from Fact~\ref{f2} which constructs a set $C_\delta$ of cardinality $2^{O(d)}(1/\delta)^d$ such that the balls $B(c,\delta)$, $c\in C_\delta$, cover the ball $B({\bf 0},1)$.
Then, for each pair $x_1,x_2\in X$ and each index $i=1,\dots,I$, we calculate the value $d_i(x_1,x_2)=dist(x_1,x_2)\frac{\varepsilon^{1-i/I}}{1+\varepsilon}$ and add to the output the elements of the set
$$C_i(x_1,x_2)=x_1+d_i(x_1,x_2)\,C_\delta,$$
where $x+rA=\{x+ra\,|\,a\in A\}$ for $x\in\mathbb R^d$, $A\subseteq\mathbb R^d$, and $r\in\mathbb R$.

Note that, since the ball $B({\bf 0},1)$ is covered by the balls $B(c,\delta)$, $c\in C_\delta$, then the ball $B(x_1,d_i(x_1,x_2))=x_1+d_i(x_1,x_2)B({\bf 0},1)$ is covered by the balls $B(c,\delta d_i(x_1,x_2))$, $c\in C_i(x_1,x_2)$.
By Lemma~\ref{lem4}, it follows that the union of the sets $C_i(x_1,x_2)$, $x_1,x_2\in X$, $i=1,\dots,I$, supplemented by the points of $X$, forms a $(1+\varepsilon)$-collection for~$X$.

It remains to select a good value of the parameter $I$.
We set $I=\big\lceil 1/\log_\varepsilon 0.9\big\rceil$.
Then $I<1+\ln\varepsilon/\ln 0.9<1+10\ln(1/\varepsilon)$ and $\delta=\varepsilon^{1+1/I}\ge 0.9\,\varepsilon$.
In this case, the set $C_\delta$ consists of $2^{O(d)}(1/\delta)^d=2^{O(d)}(1/\varepsilon)^d$ elements and can be constructed in time $2^{O(d)}(1/\varepsilon)^d$.
Thus, we obtain a $(1+\varepsilon)$-collection of size
$$n^2\cdot(1+10\ln(1/\varepsilon))\cdot 2^{O(d)}(1/\varepsilon)^d=2^{O(d)}(1/\varepsilon)^d\ln(2/\varepsilon)\,n^2$$
in time $2^{O(d)}(1/\varepsilon)^d\ln(2/\varepsilon)\,n^2$.

\section{Linear-size $(1+\varepsilon)$-collections}
In this section, we describe how to construct a $(1+\varepsilon)$-collection of linear size.
The main idea is that we will process not every lens $L_\varepsilon(x_1,x_2)$, $x_1,x_2\in X$, but only $O(n)$ of them which approximate all the other ones.
A key instrument to get such an approximation is well-separated pair decompositions (WSPD).

\subsection{Basic facts on WSPD}
Let $({\cal M},dist)$ be any metric space and $X$ be a set of $n$ points in this space.
For any finite set $A\subseteq{\cal M}$, denote by $diam(A)$ its \emph{diameter}, i.e., the maximum value of $dist(x,y)$ over all $x,y\in A$.
For any finite sets $A,B\subseteq{\cal M}$, denote by $dist(A,B)$ the minimum value of $dist(x,y)$ over all $x\in A$, $y\in B$ and denote by $A\otimes B$ the set of all the unordered pairs $\{a,b\}$, $a\in A$, $b\in B$, $a\ne b$.

\begin{definition*}
Given a real number $t\ge 1$, a \emph{$t$-well separated pair decompostion} or, shortly, a \emph{$t$-WSPD} of the set $X$ is a family of pairs $\{A_1,B_1\},\dots,\{A_s,B_s\}$ such that\smallskip\\
{\rm({\tt a})} $A_k,B_k\subseteq X$ for every $k\in\{1,\dots,s\}$;\smallskip\\
{\rm({\tt b})} $A_k\cap B_k=\emptyset$ for every $k\in\{1,\dots,s\}$;\smallskip\\
{\rm({\tt c})} $\bigcup_{k=1}^sA_k\otimes B_k=X\otimes X$;\smallskip\\
{\rm({\tt d})} $dist(A_k,B_k)\ge t\max\big\{diam(A_k),\,diam(B_k)\big\}$ for every $k\in\{1,\dots,s\}$.
\end{definition*}

Here, we use the definition of WSPD in the form of \citet{Mendel}.
The stronger form of \citet{CK,Talwar} additionally requires that different pairs $A_k\otimes B_k$ do not intersect.

Intuitively, a $t$-WSPD gives an approximation of all $n(n-1)/2$ pairs $x_1,x_2\in X$ by $s$ ones.
Indeed, by properties ({\tt c}) and ({\tt d}) of WSPD, if $t$ is sufficiently large and a representative $\{a_k,b_k\}$ of each set $A_k\otimes B_k$ is chosen, then every pair $\{x_1,x_2\}\in X\otimes X$ is ``metrically close'' to one of the pairs $\{a_k,b_k\}$ in the sense that the distances between the corresponding elements of these pairs are relatively close to zero.

An important fact is that any finite set in a metric space of fixed doubling dimension admits a linear-size WSPD.

\begin{definition*}
The \emph{doubling dimension} of a metric space is the smallest value $dim\ge 0$ such that every ball in this space can be covered by $2^{dim}$ balls of half the radius.
\end{definition*}

\begin{fact}{\rm\citep{Mendel}}\label{f4}
For any $t\ge 1$ and any set of $n$ points in a metric space of doubling dimension $dim$, a $t$-WSPD of size $t^{O(dim)}n$ can be computed by a randomized algorithm in expected time $2^{O(dim)}n\ln n+t^{O(dim)}n$.
\end{fact}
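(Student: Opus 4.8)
The statement combines two ingredients: a near-linear-time construction of a hierarchical net (a ``net tree'') of $X$, and a Callahan--Kosaraju-style extraction of a WSPD from it, with Euclidean volume arguments replaced throughout by the packing property of doubling metrics. The plan is to build the net tree first and then run the extraction.

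\emph{Net tree.} Along a random permutation of $X$ one greedily forms, for every integer scale $i$, a $2^i$-net $N_i$ (a maximal $2^i$-separated subset of $X$), arranged so that $N_i\supseteq N_{i+1}$ and every point of $X$ lies within $2^i$ of $N_i$. The net tree then has the points of $N_i$ as its level-$i$ nodes, each level-$i$ node being a child of a nearest level-$(i+1)$ node (within distance $2^{i+1}$), with leaves the singletons of $X$; with a node $u$ one stores a representative $\mathrm{rep}(u)\in X$, a scale $r_u=2^i$, and the subset $P_u\subseteq X$ of leaves below $u$, so that $P_u\subseteq B(\mathrm{rep}(u),2r_u)$ and $diam(P_u)\le 4r_u$. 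The doubling property bounds the number of children of a node by $2^{O(dim)}$, and after contracting non-branching chains the tree has $O(n)$ nodes. Building such a tree in expected time $2^{O(dim)}n\ln n$, accessing the metric only through distance queries, is the content of the net-construction algorithm of Har-Peled and Mendel (a randomized incremental construction supported by an approximate-nearest-neighbour / ring-separator hierarchy); this is where the randomization and the $\ln n$ factor enter.

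\emph{WSPD extraction.} Fix an absolute constant $\lambda$ (e.g.\ $\lambda=8$) and run the standard recursion on pairs of nodes, beginning with the root paired with itself: at a pair $(u,v)$, if $dist(\mathrm{rep}(u),\mathrm{rep}(v))\ge\lambda\,t\max\{r_u,r_v\}$ output the pair $\{P_u,P_v\}$, otherwise, taking $r_u\ge r_v$ say, recurse on $(u',v)$ over all children $u'$ of $u$. Properties ({\tt a}) and ({\tt b}) of a WSPD are immediate, since a pair whose two sides share a point satisfies $dist(\mathrm{rep}(u),\mathrm{rep}(v))\le 2r_u+2r_v<\lambda\,t\max\{r_u,r_v\}$ and is therefore recursed further, never output; property ({\tt d}) follows by the triangle inequality from the output test together with $P_u\subseteq B(\mathrm{rep}(u),2r_u)$, $P_v\subseteq B(\mathrm{rep}(v),2r_v)$; and property ({\tt c}) holds because, tracking the unique branch on which $x_1$ and $x_2$ remain on opposite sides of the current pair, the recursion eventually outputs a pair $\{P_u,P_v\}$ with $x_1\in P_u$ and $x_2\in P_v$. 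For the size bound one runs the Callahan--Kosaraju counting with the packing lemma (a ball of radius $R$ in a metric of doubling dimension $dim$ contains at most $(R/r)^{O(dim)}$ points that are pairwise $r$-separated) in place of volume: each node of the tree is involved in only $t^{O(dim)}$ output pairs, because at an output pair $\{P_u,P_v\}$ the two scales are within a constant factor of the scale at which the recursion last branched, so the relevant representatives are simultaneously $\Omega(r_u)$-separated and contained in a ball of radius $O(t\,r_u)$ around $\mathrm{rep}(u)$. Summing over the $O(n)$ nodes gives $s=t^{O(dim)}n$, and the same estimate bounds the work per node, so extraction costs $t^{O(dim)}n$ further time, for a total of $2^{O(dim)}n\ln n+t^{O(dim)}n$.

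The principal obstacle is the net-tree construction: done by brute force it would cost $\Theta(n^2)$ distance evaluations, and shaving this to expected $2^{O(dim)}n\ln n$ in a metric given only by an oracle requires the full randomized machinery of Har-Peled and Mendel, with care that every hidden constant depends on $dim$ solely through the packing bound. Once the net tree is in hand, the WSPD extraction and the counting above are direct adaptations of the classical Euclidean argument, so the remainder is routine.
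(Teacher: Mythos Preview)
The paper does not prove Fact~\ref{f4} at all: it is quoted as a known result from \citet{Mendel} and used as a black box. There is therefore no ``paper's own proof'' to compare against.

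That said, your sketch is a faithful outline of the Har-Peled--Mendel construction and would stand as a reasonable summary of why the cited result holds. You correctly identify the two phases (randomized near-linear net-tree construction, then Callahan--Kosaraju-style pair extraction), the replacement of Euclidean volume arguments by the doubling packing lemma, and the fact that the hard part is getting the net tree in $2^{O(dim)}n\ln n$ expected time rather than $\Theta(n^2)$. The only mild looseness is in the constants for the containment radius of $P_u$ and in the charging argument for the $t^{O(dim)}$ bound per node, but these are exactly the details one would expect to be filled in by referring to the source, and they do not affect the asymptotics claimed.
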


Note that space $\mathbb R^d$ with the metric induced by any vector norm is of doubling dimension $O(d\,)$.
Indeed, by Fact~\ref{f3} applying to $\delta=1/2$, each ball in this space can be covered by $5^d$ balls of half the radius.
So we have $dim\le\log_2(5^d)\approx 2.32\,d$.
By Fact~\ref{f4}, it follows that, in the case of $d$-dimensional normed vector space, we can construct a $t$-WSPD of size $t^{O(d)}n$ in expected time $2^{O(d)}n\ln n+t^{O(d)}n$.
In the Euclidean case, such a $t$-WSPD can also be computed by a deterministic algorithm:

\begin{fact}{\rm\setcitestyle{open={},close={}}(\citep{CK}; \citep[see also][]{Smid})\setcitestyle{open={(},close={)}}}\label{f5}
For any $t\ge 1$ and any set of $n$ points in space $\mathbb R^d$ with Euclidean metric, a $t$-WSPD of size $t^dn$ can be computed by a deterministic algorithm in time $O(n\ln n+t^dn)$.
\end{fact}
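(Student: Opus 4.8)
The statement is the classical Euclidean WSPD theorem of Callahan and Kosaraju, so the plan is to reconstruct their split-tree argument. First I would build, in $O(n\ln n)$ time, a binary \emph{fair split tree} $T$ for $X$ (see \citep{CK}): each node $v$ carries an axis-parallel box $R_v$, with $X_v=X\cap R_v$; the root box bounds $X$; and a node with $|X_v|\ge 2$ is split by a hyperplane orthogonal to a longest side of $R_v$, placed so that the two children receive boxes that stay ``fair'' (bounded aspect ratio) and so that the longest side length $l_{\max}(R_v)$ halves every $O(d)$ levels. The tree then has $O(n)$ nodes, the sets $X_v$ being exactly the subsets carved out by the boxes.

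Next I would generate the decomposition by the standard recursion. Call nodes $u\ne v$ \emph{$t$-separated} when $dist(R_u,R_v)\ge t\max\{diam(R_u),diam(R_v)\}$; this forces property ({\tt d}) for the pair $\{X_u,X_v\}$, since $diam(X_u)\le diam(R_u)$ and likewise for $v$. The procedure $\Pi(u,v)$, defined for $u\ne v$, emits $\{X_u,X_v\}$ if $u$ and $v$ are $t$-separated and otherwise --- assuming without loss of generality that $l_{\max}(R_u)\ge l_{\max}(R_v)$ --- calls $\Pi(u_1,v)$ and $\Pi(u_2,v)$ for the children $u_1,u_2$ of $u$; the WSPD is then the union, over all internal nodes $v$ of $T$, of the output of $\Pi(v_1,v_2)$ for the two children $v_1,v_2$ of $v$. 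Properties ({\tt a}) and ({\tt c}) are routine --- the latter by induction on $T$: distinct points $x_1,x_2$ have an internal lowest common ancestor $w$, so $x_1,x_2$ lie in the sets handled by $\Pi(w_1,w_2)$, which keeps subdividing the larger box until the $t$-separation test succeeds on a sub-pair containing both. Property ({\tt b}) holds because every emitted pair consists of two nodes lying in disjoint subtrees, so $X_u\cap X_v=\emptyset$; and ({\tt d}) is built into the stopping rule.

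The real work is the size bound. I would charge each emitted pair $\{X_u,X_v\}$ to whichever of the two nodes, say $u$, was created by the last split made before the $t$-separation test succeeded, i.e.\ to the child whose parent $\widehat u$ was still not $t$-separated from $v$ (when $v$ is the freshly split one instead, charge symmetrically to $v$). For a fixed node $u$ one then argues that the partners $v$ charged to it have bounding boxes of size comparable to $R_u$ --- an upper bound because $l_{\max}(R_v)\le l_{\max}(R_{\widehat u})$ and fairness ties $l_{\max}(R_{\widehat u})$ to $l_{\max}(R_u)$ up to a constant, and a matching lower bound because the recursion only ever splits the currently larger box --- and that $R_v$ lies within distance $O(t\,diam(R_u))$ of $R_u$, using $dist(R_u,R_v)\le dist(R_{\widehat u},R_v)+diam(R_{\widehat u})$ together with the failed test $dist(R_{\widehat u},R_v)<t\max\{diam(R_{\widehat u}),diam(R_v)\}$. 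These partner boxes are pairwise interior-disjoint (nodes of a split tree that are not ancestor-related lie on opposite sides of some splitting hyperplane), so a volume/packing argument in $\mathbb R^d$ bounds their number by $t^{O(d)}$, refined to $O(t^d)$ in \citep{CK}. As $T$ has $O(n)$ nodes, the WSPD has $O(t^dn)$ pairs. The running time then follows: building $T$ costs $O(n\ln n)$, and the number of calls to $\Pi$ is $O(1)$ per emitted pair plus an $O(n)$ recursion skeleton, i.e.\ $O(t^dn)$, giving $O(n\ln n+t^dn)$ in total.

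The hard part will be that packing step --- pinning down precisely why every partner of a fixed node is simultaneously size-comparable to it and located within $O(t)$ times that size, and why the recursion descends at the right granularity so that no node's parent is already $t$-separated from too many partners. This is exactly where the quantitative guarantees of the fair-split construction (controlled aspect ratio together with geometric shrinkage of the boxes) are indispensable; once that charging lemma is established, verifying ({\tt a})--({\tt d}) and counting the recursive calls for the time bound are mechanical.
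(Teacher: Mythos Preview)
The paper does not prove this statement: it is stated as a \emph{Fact} and simply attributed to \citep{CK} (see also \citep{Smid}), with no argument given. Your proposal correctly reconstructs the classical Callahan--Kosaraju fair-split-tree proof from those references, so there is nothing to compare against in the paper itself; your sketch is sound and is exactly the intended external proof.
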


\subsection{From $O(n^2)$ to $O(n)$}
Suppose that we are given a $t$-WSPD $\{A_1,B_1\},\dots,\{A_s,B_s\}$ of the set $X$ for some large value of $t$ which will be specified later.
Next, for each $k=1,\dots,s$, we choose arbitrary representatives $a_k\in A_k$, $b_k\in B_k$.\medskip

\noindent\textbf{Idea of a linear-size set construction.}
As before, we will use Lemma~\ref{lem4}.
To get the set of centers $C$ satisfying its condition, we will construct a slightly excess covering of each ball $B(a_k,d_i(a_k,b_k))$, $k=1,\dots,s$, $i=1,\dots,I$, by balls of a radius slightly less than $\delta d_i(a_k,b_k)$.
According to property ({\tt d}) of WSPD, for large values of $t$, every pair $x_1\in A_k$, $x_2\in B_k$ is ``metrically close'' to the pair $a_k,b_k$, so the constructed covering of the ball $B(a_k,d_i(a_k,b_k))$ will also be a good covering of the ball $B(x_1,d_i(x_1,x_2))$.
Hence, by property ({\tt c}) of WSPD, the union of all the constructed coverings will be a good covering of all the balls $B(x_1,d_i(x_1,x_2))$, $x_1,x_2\in X$.\medskip

Let us give a more detailed description and justification of such a construction.
The condition of Lemma~\ref{lem4} requires that each ball $B(x_1,d_i(x_1,x_2))$ must be covered by balls of radius $\delta d_i(x_1,x_2)$, where $d_i(x_1,x_2)=dist(x_1,x_2)\frac{\varepsilon^{1-i/I}}{1+\varepsilon}$ and $\delta=\varepsilon^{1+1/I}$.
By the triangle inequality and property ({\tt d}) of WSPD, the distance between any two points $x_1\in A_k$, $x_2\in B_k$ is estimated as
$$dist(x_1,x_2)\le dist(a_k,b_k)+dist(x_1,a_k)+dist(b_k,x_2)\le dist(a_k,b_k)(1+2/t),$$
$$dist(x_1,x_2)\ge dist(a_k,b_k)-dist(x_1,a_k)-dist(b_k,x_2)\ge dist(a_k,b_k)(1-2/t),$$
so
$$d_i(a_k,b_k)(1-2/t)\le d_i(x_1,x_2)\le d_i(a_k,b_k)(1+2/t).$$
Property ({\tt d}) of WSPD also implies that each point $x_1\in A_k$ lies at distance at most $dist(a_k,b_k)/t$ from $a_k$.
It follows that
$B(x_1,d_i(x_1,x_2))\subseteq B(a_k,r_i(k,t))$, where
$$r_i(k,t)=d_i(a_k,b_k)(1+2/t)+dist(a_k,b_k)/t.$$
Similarly, we have $B(x_2,d_i(x_1,x_2))\subseteq B(b_k,r_i(k,t))$ for each $x_2\in B_k$.
Thus, to sa\-tis\-fy the condition of Lemma~\ref{lem4}, it is sufficient to cover each of the balls $B(a_k,r_i(k,t))$ and $B(b_k,r_i(k,t))$ by balls of radius $\delta d_i(a_k,b_k)(1-2/t)$:

\begin{lemma}\label{lem5}
Suppose that $t>2$ and a finite set $C\subseteq{\cal M}$ satisfies the property
$$B(x,\,r_i(k,t))\subseteq\bigcup_{c\in C}B\big(c,\,\delta d_i(a_k,b_k)(1-2/t)\big)$$
for each $k=1,\dots,s$, $i=1,\dots,I$, and $x\in\{a_k,b_k\}$.
Then $X\cup C$ is a $(1+\varepsilon)$-collection for~$X$.
\end{lemma}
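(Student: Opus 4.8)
The plan is to derive Lemma~\ref{lem5} directly from Lemma~\ref{lem4} by showing that a set $C$ satisfying the hypothesis of Lemma~\ref{lem5} automatically satisfies the hypothesis of Lemma~\ref{lem4}, i.e.\ that $C$ provides, for every pair $x_1,x_2\in X$ and every index $i$, a covering of the ball $B(x_1,d_i(x_1,x_2))$ by balls of radius $\delta d_i(x_1,x_2)$. All the geometric estimates needed for this are already assembled in the paragraph preceding the lemma statement; the proof is essentially a bookkeeping argument that chains them together.

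Concretely, I would proceed as follows. First, fix an arbitrary unordered pair $\{x_1,x_2\}\in X\otimes X$. By property~({\tt c}) of the WSPD there is an index $k\in\{1,\dots,s\}$ with $\{x_1,x_2\}\in A_k\otimes B_k$; after possibly swapping $x_1$ and $x_2$ (which is harmless, since the condition of Lemma~\ref{lem4} is symmetric in the two points of the pair once one covers the ball around each of them), we may assume $x_1\in A_k$ and $x_2\in B_k$. Second, recall the two inclusions established before the lemma: $B(x_1,d_i(x_1,x_2))\subseteq B(a_k,r_i(k,t))$ and $B(x_2,d_i(x_1,x_2))\subseteq B(b_k,r_i(k,t))$, which follow from property~({\tt d}) of the WSPD via the bound $dist(x_1,x_2)\le d_i$-comparisons and the fact that $x_1$ lies within $dist(a_k,b_k)/t$ of $a_k$ (and symmetrically for $x_2,b_k$). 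Third, apply the hypothesis of Lemma~\ref{lem5} with $x=a_k$: the ball $B(a_k,r_i(k,t))$ is covered by balls $B(c,\delta d_i(a_k,b_k)(1-2/t))$ with $c\in C$. Combining with the inclusion, $B(x_1,d_i(x_1,x_2))$ is covered by these same balls. Fourth, use the lower estimate $d_i(x_1,x_2)\ge d_i(a_k,b_k)(1-2/t)$, which rearranges to $\delta d_i(a_k,b_k)(1-2/t)\le\delta d_i(x_1,x_2)$, so that each covering ball $B(c,\delta d_i(a_k,b_k)(1-2/t))$ is contained in $B(c,\delta d_i(x_1,x_2))$. Hence $B(x_1,d_i(x_1,x_2))\subseteq\bigcup_{c\in C}B(c,\delta d_i(x_1,x_2))$, which is exactly the hypothesis of Lemma~\ref{lem4} for this pair and this index $i$. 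Since $x_1,x_2,i$ were arbitrary, Lemma~\ref{lem4} applies and yields that $X\cup C$ is a $(1+\varepsilon)$-collection for $X$.

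There is no real obstacle here: every inequality invoked is one of the displayed inequalities immediately preceding the lemma, and the only subtlety is the ``swap $x_1$ and $x_2$'' remark, which is needed because the WSPD membership $\{x_1,x_2\}\in A_k\otimes B_k$ only tells us one of the two points is in $A_k$. I would also note in passing the role of the assumption $t>2$: it guarantees $1-2/t>0$, so that the covering radius $\delta d_i(a_k,b_k)(1-2/t)$ is a genuine positive number and the inclusion of covering balls in the last step goes the right way. Beyond that, the argument is a short deduction, so I would keep the write-up to a few sentences mirroring the four steps above and then close with ``The lemma is proved.''
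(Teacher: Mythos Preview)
Your proposal is correct and follows essentially the same approach as the paper's own proof: reduce to Lemma~\ref{lem4} by fixing an arbitrary pair in $X\otimes X$, invoke property~({\tt c}) of the WSPD to locate it in some $A_k\otimes B_k$, use the pre-established inclusion $B(x_1,d_i(x_1,x_2))\subseteq B(a_k,r_i(k,t))$ together with the lower bound $d_i(x_1,x_2)\ge d_i(a_k,b_k)(1-2/t)$, and then apply the covering hypothesis. The paper's write-up is a bit terser (it does not spell out the swap or the role of $t>2$), but the logical skeleton is identical.
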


\begin{proof}
Let $u,v$ be any different elements of $X$ and $i\in\{1,\dots,I\}$.
Property ({\tt c}) of WSPD implies that the pair $\{u,v\}$ belongs to some set $A_k\otimes B_k$, $k\in\{1,\dots,s\}$.
Then, by the above observations, the inequality $d_i(a_k,b_k)(1-2/t)\le d_i(u,v)$ holds and the ball $B(u,d_i(u,v))$ is contained in one of the balls $B(a_k,r_i(k,t))$ and $B(b_k,r_i(k,t))$.
By the condition, it follows that each element of the ball $B(u,d_i(u,v))$ lies at distance at most $\delta d_i(a_k,b_k)(1-2/t)\le\delta d_i(u,v)$ from the set~$C$.
Therefore, according to Lemma~\ref{lem4}, the set $X\cup C$ is a $(1+\varepsilon)$-collection for~$X$.
The lemma is proved.
\hfill$\Box$
\end{proof}

\subsection{Constructing a linear-size $(1+\varepsilon)$-collection in a normed space}
Here, based on Lemma~\ref{lem5}, we describe how to compute a linear-size $(1+\varepsilon)$-collection in the case when ${\cal M}=\mathbb R^d$ and the metric $dist$ is induced by any vector norm.
In this case, covering a ball of radius $r_i(k,t)$ by balls of radius $\delta d_i(a_k,b_k)(1-2/t)$ is reduced to covering the ball $B({\bf 0},1)$ by balls of radius $\delta'=\delta d_i(a_k,b_k)(1-2/t)/r_i(k,t)$.
To construct such a covering, we use the coverings described in Sect.~3.2.

Let us set $I=\big\lceil 1/\log_\varepsilon 0.9\big\rceil$ and $t=\max\big\{10,\,\frac{1+\varepsilon}{\varepsilon}\big\}$.
Then, as before, we obtain the inequalties $I<1+\ln\varepsilon/\ln 0.9<1+10\ln(1/\varepsilon)$ and $\delta=\varepsilon^{1+1/I}\ge 0.9\,\varepsilon$.
On the other hand, by the choice of $t$ and the definition of $d_i(.\,,.)$, we have
$$\textstyle dist(a_k,b_k)/t\le dist(a_k,b_k)\frac{\varepsilon}{1+\varepsilon}=d_0(a_k,b_k)\le d_i(a_k,b_k),$$
so
$$r_i(k,t)\le d_i(a_k,b_k)(1+2/t)+d_i(a_k,b_k)\le 2.2\,d_i(a_k,b_k).$$
It follows that $\delta'\ge\delta\frac{0.8}{2.2}>0.3\,\varepsilon$.
Denote by $C_{\delta'}$ the set of centers of the $\delta'$-radius balls covering $B({\bf 0},1)$ which is constructed by the algorithm from Fact~\ref{f2}.
In the case of Euclidean distances, we will assume that $C_{\delta'}=Gr(d,\delta')$, where the set $Gr(d,.\,)$ is defined in Sect.~3.2, i.e.,
$C_{\delta'}=\big(\frac{2\delta'}{\sqrt{d}}\big)\,\mathbb Z^d\cap B({\bf 0},1+\delta').$

Note that, since the ball $B({\bf 0},1)$ is covered by the balls $B(c,\delta')$, $c\in C_{\delta'}$, then the ball $B(a_k,r_i(k,t))=a_k+r_i(k,t)B({\bf 0},1)$ is covered by the balls of radius
$$\delta'r_i(k,t)=\delta d_i(a_k,b_k)(1-2/t)$$
centered at the elements of the set
$C'_i(a_k,b_k)=a_k+r_i(k,t)\,C_{\delta'}$.
Similarly, the ball $B(b_k,r_i(k,t))$ is covered by the balls of radius $\delta d_i(a_k,b_k)(1-2/t)$ centered at the elements of the set $C'_i(b_k,a_k)=b_k+r_i(k,t)\,C_{\delta'}$.
By Lemma~\ref{lem5}, this implies that the union $C$ of the sets $C'_i(a_k,b_k)\cup C'_i(b_k,a_k)$, $k=1,\dots,s$, $i=1,\dots,I$, supplemented by the points of $X$, forms a $(1+\varepsilon)$-collection for $X$.

Let us estimate the cardinality of the set $C$ and the time required to construct it.
The algorithm from Fact~\ref{f2} computes the set $C_{\delta'}$ in time $2^{O(d)}(1/\delta')^d=2^{O(d)}(1/\varepsilon)^d$ and the size of this set is $2^{O(d)}(1/\varepsilon)^d$.
The randomized algorithm from Fact~\ref{f4} outputs a $t$-WSPD of the set $X$ in expected time $2^{O(d)}n\ln n+t^{O(d)}n=2^{O(d)}n\ln n+(2/\varepsilon)^{O(d)}n$ and the size of this WSPD is $s=t^{O(d)}n=(2/\varepsilon)^{O(d)}n$.
Hence, in the case of arbitrary vector norm, the set $C$ consists of $2sI\cdot 2^{O(d)}(1/\varepsilon)^d=(2/\varepsilon)^{O(d)}n$ elements and can be constructed in expected time $2^{O(d)}n\ln n+(2/\varepsilon)^{O(d)}n$.

In the case of Euclidean metric, we use the set $C_{\delta'}=Gr(d,\delta')$, which can be constructed in time $2^{O(d)}(1/\delta')^d=2^{O(d)}(1/\varepsilon)^d$ and consists of $2^{O(d)}(1/\varepsilon)^d$ points.
A $t$-WSPD of the set $X$ is computed by the deterministic algorithm from Fact~\ref{f5} in time $O(n\ln n+t^dn)=O(n\ln n)+2^{O(d)}(1/\varepsilon)^dn$ and its size is $s=t^dn=2^{O(d)}(1/\varepsilon)^dn$.
Hence, the set $C$ consists of $2sI\cdot 2^{O(d)}(1/\varepsilon)^d=2^{O(d)}(1/\varepsilon)^{2d}\ln(2/\varepsilon)\,n$ elements and can be constructed in time $O(n\ln n)+2^{O(d)}(1/\varepsilon)^{2d}\ln(2/\varepsilon)\,n$.

So we obtain the following statements:

\begin{theorem}
For any $\varepsilon\in(0,1]$, every set of $n$ points in space $\mathbb R^d$ with the metric induced by any vector norm admits a $(1+\varepsilon)$-collection of cardinality $(2/\varepsilon)^{O(d)}n$, which can be computed by a randomized algorithm in expected time $2^{O(d)}n\ln n+(2/\varepsilon)^{O(d)}n$.
\end{theorem}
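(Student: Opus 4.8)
The plan is to assemble the tools already built: Lemma~\ref{lem5}, the covering of the unit ball by $\Theta(\varepsilon)$-scaled copies from Facts~\ref{f2}--\ref{f3}, and the linear-size WSPD from Fact~\ref{f4}. First I would discard the boundary case $\varepsilon=1$, where the Example of Section~2 already shows that $X$ itself is a $2$-collection, hence a $(1+\varepsilon)$-collection of size $n$; so from now on I may assume $0<\varepsilon<1$.

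Next I would fix $I=\big\lceil 1/\log_\varepsilon 0.9\big\rceil$ and $t=\max\{10,(1+\varepsilon)/\varepsilon\}$, recording the elementary consequences $I<1+10\ln(1/\varepsilon)$, $\delta=\varepsilon^{1+1/I}\ge 0.9\,\varepsilon$, and, from $dist(a_k,b_k)/t\le d_0(a_k,b_k)\le d_i(a_k,b_k)$, the bound $r_i(k,t)\le 2.2\,d_i(a_k,b_k)$. Since Fact~\ref{f3} with scaling $1/2$ covers every ball of a normed $\mathbb R^d$ by $5^d$ half-radius balls, this space has doubling dimension at most $\log_2(5^d)=O(d)$, so Fact~\ref{f4} produces a $t$-WSPD $\{A_1,B_1\},\dots,\{A_s,B_s\}$ of $X$ with $s=t^{O(d)}n=(2/\varepsilon)^{O(d)}n$ in expected time $2^{O(d)}n\ln n+(2/\varepsilon)^{O(d)}n$; I then fix representatives $a_k\in A_k$, $b_k\in B_k$.

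Then, using Facts~\ref{f2}--\ref{f3}, I would compute once a set $C_\star$ of $2^{O(d)}(1/\varepsilon)^d$ points, in time $2^{O(d)}(1/\varepsilon)^d$, whose $0.3\,\varepsilon$-radius balls cover $B({\bf 0},1)$. By translation invariance and homogeneity of the norm, for every $k$, every $i\in\{1,\dots,I\}$, and every $x\in\{a_k,b_k\}$ the set $x+r_i(k,t)\,C_\star$ is a covering of $B(x,r_i(k,t))$ by balls of radius $0.3\,\varepsilon\,r_i(k,t)\le 0.66\,\varepsilon\,d_i(a_k,b_k)\le\delta\,d_i(a_k,b_k)(1-2/t)$, where the last step uses $\delta(1-2/t)\ge 0.9\,\varepsilon\cdot 0.8=0.72\,\varepsilon$. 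Taking $C$ to be the union of these $2sI$ translated copies together with $X$, Lemma~\ref{lem5} certifies that $C$ is a $(1+\varepsilon)$-collection for $X$; a final count gives $|C|\le n+2sI\,|C_\star|=(2/\varepsilon)^{O(d)}n$ and total expected time $2^{O(d)}n\ln n+(2/\varepsilon)^{O(d)}n$.

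I do not expect a deep obstacle: the theorem is essentially a repackaging of Lemma~\ref{lem5} with the covering and WSPD machinery. The point that needs care is the joint calibration of $I$ and $t$ — they must simultaneously keep $\delta$ within a constant factor of $\varepsilon$, keep $r_i(k,t)$ within a constant factor of $d_i(a_k,b_k)$ so that a single unit-ball covering at scale $\Theta(\varepsilon)$ serves all pairs and all indices $i$, and keep $t=(2/\varepsilon)^{O(1)}$ so that the WSPD remains of size $(2/\varepsilon)^{O(d)}n$ — together with the verification that the doubling dimension of a $d$-dimensional normed space is $O(d)$, which is what permits Fact~\ref{f4} to be invoked with the claimed exponent.
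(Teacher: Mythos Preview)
Your proposal is correct and follows essentially the same route as the paper: the same choices $I=\lceil 1/\log_\varepsilon 0.9\rceil$, $t=\max\{10,(1+\varepsilon)/\varepsilon\}$, the same bounds $\delta\ge 0.9\,\varepsilon$ and $r_i(k,t)\le 2.2\,d_i(a_k,b_k)$, the same invocation of Fact~\ref{f4} via the $O(d)$ doubling dimension of a normed $\mathbb R^d$, and the same appeal to Lemma~\ref{lem5} after translating a single unit-ball covering from Facts~\ref{f2}--\ref{f3}. The only cosmetic difference is that you fix the covering radius at $0.3\,\varepsilon$ once and for all and verify $0.3\,\varepsilon\,r_i(k,t)\le\delta\,d_i(a_k,b_k)(1-2/t)$ directly, whereas the paper phrases this via the intermediate quantity $\delta'=\delta\,d_i(a_k,b_k)(1-2/t)/r_i(k,t)>0.3\,\varepsilon$; the content is identical.
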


\begin{theorem}
For any $\varepsilon\in(0,1]$, every set of $n$ points in space $\mathbb R^d$ with Euclidean metric admits a $(1+\varepsilon)$-collection of cardinality $2^{O(d)}(1/\varepsilon)^{2d}\ln(2/\varepsilon)\,n$, which can be computed by a deterministic algorithm in time $O(n\ln n)+2^{O(d)}(1/\varepsilon)^{2d}\ln(2/\varepsilon)\,n$.
\end{theorem}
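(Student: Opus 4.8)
The plan is to specialize the WSPD-based construction behind Lemma~\ref{lem5} to the Euclidean setting, replacing the two ingredients used for general norms — the randomized WSPD of Fact~\ref{f4} and the covering algorithm of Fact~\ref{f2} — by deterministic Euclidean counterparts: the deterministic WSPD of Fact~\ref{f5} and the explicit axis-parallel grid $Gr(d,\sigma)$ covering the unit ball described in Sect.~3.2. If $\varepsilon=1$ there is nothing to do, since $X$ itself is a $2$-collection for $X$ by the Example in Sect.~2; so assume $\varepsilon\in(0,1)$. Fix $I=\big\lceil 1/\log_\varepsilon 0.9\big\rceil$ and $t=\max\{10,\,(1+\varepsilon)/\varepsilon\}$ exactly as in Sect.~4.3, which give $I<1+10\ln(1/\varepsilon)$, $\delta=\varepsilon^{1+1/I}\ge 0.9\,\varepsilon$, and $t=\Theta(1/\varepsilon)$ with $t>2$.

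First I would run the deterministic algorithm of Fact~\ref{f5} to obtain a $t$-WSPD $\{A_1,B_1\},\dots,\{A_s,B_s\}$ of $X$ with $s=t^dn=2^{O(d)}(1/\varepsilon)^dn$ pairs, in time $O(n\ln n)+2^{O(d)}(1/\varepsilon)^dn$, and pick representatives $a_k\in A_k$, $b_k\in B_k$ (distinct, since $A_k\cap B_k=\emptyset$, so $d_i(a_k,b_k)>0$). For each $k$ and each $i\in\{1,\dots,I\}$ I form $r_i(k,t)=d_i(a_k,b_k)(1+2/t)+dist(a_k,b_k)/t$ and the ratio $\delta'=\delta\,d_i(a_k,b_k)(1-2/t)/r_i(k,t)$; by translation invariance and homogeneity of the Euclidean metric, covering $B(a_k,r_i(k,t))$ by balls of radius $\delta\,d_i(a_k,b_k)(1-2/t)=\delta' r_i(k,t)$ amounts to covering $B({\bf 0},1)$ by balls of radius $\delta'$, which the grid $C_{\delta'}=Gr(d,\delta')=\big(\frac{2\delta'}{\sqrt d}\big)\mathbb Z^d\cap B({\bf 0},1+\delta')$ achieves with $|C_{\delta'}|=2^{O(d)}(1/\delta')^d$ (Sect.~3.2). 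Then $C'_i(a_k,b_k)=a_k+r_i(k,t)\,C_{\delta'}$ and $C'_i(b_k,a_k)=b_k+r_i(k,t)\,C_{\delta'}$ cover $B(a_k,r_i(k,t))$ and $B(b_k,r_i(k,t))$ respectively by balls of radius $\delta\,d_i(a_k,b_k)(1-2/t)$, so the union $C$ of all these sets, together with $X$, is a $(1+\varepsilon)$-collection for $X$ by Lemma~\ref{lem5}. (One may instead fix a single grid covering $B({\bf 0},1)$ by balls of one radius $\le 0.3\,\varepsilon$ and rescale it for each $(k,i)$, since a finer covering remains valid after scaling; this avoids re-running the grid construction.)

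The remaining estimates are routine, the only inequality worth spelling out being $\delta'=\Omega(\varepsilon)$. From $t\ge(1+\varepsilon)/\varepsilon$ one gets $dist(a_k,b_k)/t\le d_0(a_k,b_k)\le d_i(a_k,b_k)$, hence $r_i(k,t)\le d_i(a_k,b_k)(2+2/t)\le 2.2\,d_i(a_k,b_k)$; from $t\ge 10$ one gets $1-2/t\ge 0.8$; so $\delta'\ge\delta\cdot 0.8/2.2>0.3\,\varepsilon$ (and $\delta'<\delta<1$). Thus $|C_{\delta'}|=2^{O(d)}(1/\varepsilon)^d$ throughout, $|C|\le 2sI\cdot 2^{O(d)}(1/\varepsilon)^d+n=2^{O(d)}(1/\varepsilon)^{2d}\ln(2/\varepsilon)\,n$, and the running time is $O(n\ln n)$ for the WSPD, plus at most $2sI$ grid constructions of cost $2^{O(d)}(1/\varepsilon)^d$ each, plus $O(d)$ arithmetic per output point — altogether $O(n\ln n)+2^{O(d)}(1/\varepsilon)^{2d}\ln(2/\varepsilon)\,n$ — with every step deterministic.

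I expect the one genuinely delicate point to be the calibration of $t$. It must be large enough — a constant multiple of $1/\varepsilon$, plus an absolute constant ($10$) to tame the $(1\pm 2/t)$ distortions — that after rescaling to the unit ball the covering radius $\delta'$ stays above a fixed multiple of $\varepsilon$ and the inequality $d_i(a_k,b_k)(1-2/t)\le d_i(u,v)$ demanded by Lemma~\ref{lem5} survives; yet it must be small enough that the factor $t^{O(d)}$ it injects into $s$, and hence into $|C|$, is only $(1/\varepsilon)^{O(d)}$. The choice $t=\max\{10,(1+\varepsilon)/\varepsilon\}$ is essentially forced by this trade-off, and what remains is bookkeeping with the absolute constants $0.9$, $0.8$, $2.2$, $0.3$.
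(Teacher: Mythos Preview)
Your proof is correct and follows essentially the same approach as the paper: the same parameter choices $I=\lceil 1/\log_\varepsilon 0.9\rceil$ and $t=\max\{10,(1+\varepsilon)/\varepsilon\}$, the same deterministic Euclidean WSPD from Fact~\ref{f5}, the same grid $Gr(d,\delta')$ covering of the unit ball, and the same chain of inequalities yielding $\delta'>0.3\,\varepsilon$. Your parenthetical about fixing a single grid at radius $\le 0.3\,\varepsilon$ and rescaling is a minor implementation remark the paper does not make explicit, but it does not change the argument or the bounds.
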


Thus, if the dimension of space is fixed, the proposed algorithms compute linear-size $(1+\varepsilon)$-collections in almost linear time.
Note that these algorithms remain to be polynomial even if $d$ is not fixed but bounded by a ``slowly growing'' value $\Theta(\ln n)$.
In this case, a $(1+\varepsilon)$-collection of size $(2/\varepsilon)^{O(\ln n)}n=n^{O(\ln(2/\varepsilon))}$ can be constructed in time $2^{O(\ln n)}n\ln n+(2/\varepsilon)^{O(\ln n)}n=n^{O(\ln(2/\varepsilon))}$.

\subsection{Linear-size $(1+\varepsilon)$-collections in a doubling space}
The constructions described in Sect. 4.2 and~4.3 can be extended to the case of any metric space $({\cal M},dist)$ of doubling dimension $dim$.
Such an extension is based on Lemma~\ref{lem5} and the following simple observation:

\begin{lemma}\label{lem6}
Suppose that a metric space $({\cal M},dist)$ is of doubling dimension $dim$ and let $r>0$ and $\sigma\in(0,1)$.
Then
\begin{itemize}
\item[{\rm({\tt a})}] any ball of radius $r$ in the space $({\cal M},dist)$ can be covered by $(2/\sigma)^{dim}$ balls of radius~$\sigma r$;
\item[{\rm({\tt b})}] a covering of any $r$-radius ball in the space $({\cal M},dist)$ by $(2/\sigma)^{O(dim)}$ balls of radius $\sigma r$ can be computed using $(1/\sigma)^{O(dim)}$ queries to an oracle which returns a covering of arbitrary ball in this space by $2^{O(dim)}$ balls of half the radius.
\end{itemize}
\end{lemma}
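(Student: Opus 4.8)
The plan is to prove both parts by unrolling the defining property of the doubling dimension exactly $j:=\lceil\log_2(1/\sigma)\rceil$ times; note $j\ge 1$ since $\sigma\in(0,1)$. For part~({\tt a}), I would start from the ball $B(x,r)$, apply the doubling property to cover it by $2^{dim}$ balls of radius $r/2$, then apply the same property to each of those balls to obtain $2^{2\,dim}$ balls of radius $r/4$, and continue. After $j$ steps this yields a cover of $B(x,r)$ by $(2^{dim})^j=2^{j\,dim}$ balls of radius $r/2^j$. Since $j\ge\log_2(1/\sigma)$ we have $2^j\ge 1/\sigma$, hence $r/2^j\le\sigma r$, and each of these balls may be harmlessly enlarged to radius exactly $\sigma r$ without losing the covering property. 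Finally $j<\log_2(1/\sigma)+1=\log_2(2/\sigma)$ gives $2^j<2/\sigma$, so the number of balls is $2^{j\,dim}=(2^j)^{dim}<(2/\sigma)^{dim}$, as claimed.

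For part~({\tt b}), I would carry out the same iteration but invoke the oracle in place of the abstract doubling property: the oracle applied to a ball of radius $\rho$ returns a cover by $2^{c\,dim}$ balls of radius $\rho/2$ for some absolute constant $c$. Maintaining a current list of balls and replacing every ball in the list by its oracle cover, after $j$ iterations I obtain a cover of $B(x,r)$ by $(2^{c\,dim})^j$ balls of radius $r/2^j\le\sigma r$ (again inflated to $\sigma r$). Using $j\le\log_2(1/\sigma)+1$ gives $(2^{c\,dim})^j\le 2^{c\,dim}(1/\sigma)^{c\,dim}=(2/\sigma)^{c\,dim}=(2/\sigma)^{O(dim)}$ for the size of the cover. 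The number of oracle queries equals the total number of balls processed over levels $0,\dots,j-1$, namely $\sum_{\ell=0}^{j-1}(2^{c\,dim})^\ell$; since the ratio of this geometric series is at least $2$, the sum is at most $2\,(2^{c\,dim})^{j-1}\le 2\,(2^{c\,dim})^{\log_2(1/\sigma)}=2\,(1/\sigma)^{c\,dim}=(1/\sigma)^{O(dim)}$, where I used $j-1\le\log_2(1/\sigma)$.

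There is no substantial obstacle here; the argument is a routine recursion on the doubling inequality. The only points requiring care are bookkeeping ones: choosing $j$ with the right ceiling so that the number of produced balls collapses to exactly $(2/\sigma)^{dim}$ in part~({\tt a}) and to $(2/\sigma)^{O(dim)}$ in part~({\tt b}); bounding the geometric series of queries by the sharper $(1/\sigma)^{O(dim)}$ rather than the naive $(2/\sigma)^{O(dim)}$, which is exactly where the bound $j-1\le\log_2(1/\sigma)$ is used; and observing that the balls at the last level, which a priori have radius $r/2^j$ possibly strictly smaller than $\sigma r$, can be inflated to radius exactly $\sigma r$.
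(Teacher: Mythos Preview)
Your proposal is correct and follows essentially the same route as the paper's proof: both iterate the doubling property (respectively, the oracle) exactly $\lceil\log_2(1/\sigma)\rceil$ times and use the same two-sided bound on this integer to control the radius and the count. Your handling of part~({\tt b}) is in fact slightly more explicit than the paper's, which bounds the total number of oracle calls directly by the last-level term $2^{(i-1)j}$ rather than summing the geometric series; the final asymptotic bound is the same.
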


\begin{proof}
({\tt a}) Applying the definition of doubling dimension $i$ times, where $i\ge 1$, we conclude that an $r$-radius ball in the space $({\cal M},dist)$ can be covered by $2^{i\cdot dim}$ balls of radius $r/2^i$.
Therefore, by selecting the integer $i$ for which $1/2^i\le\sigma<1/2^{i-1}$, we obtain $2^{i\cdot dim}<(2/\sigma)^{dim}$ covering balls of radius $r/2^i\le\sigma r$.

({\tt b}) Let $i$ be the integer for which $1/2^i\le\sigma<1/2^{i-1}$ and suppose that $j=O(dim)$ is an integer such that the used oracle returns a covering of arbitrary ball in the space $({\cal M},dist)$ by at most $2^j$ balls of half the radius.
Then, by induction, we obtain at most $2^{i\cdot j}<(2/\sigma)^j$ covering balls of radius $r/2^i\le\sigma r$ using $2^{(i-1)\cdot j}<(1/\sigma)^j$ queries to the oracle.
The lemma is proved.
\hfill$\Box$
\end{proof}

\begin{theorem}
For any $\varepsilon\in(0,1]$, every set of $n$ points in a metric space $({\cal M},dist)$ of doubling dimension $dim$ admits a $(1+\varepsilon)$-collection of cardinality $(2/\varepsilon)^{O(dim)}n$.
If an oracle is specified which, in time $2^{O(dim)}$, returns a covering of arbitrary ball in the space $({\cal M},dist)$ by $2^{O(dim)}$ balls of half the radius, then this $(1+\varepsilon)$-collection can be computed by a randomized algorithm in expected time $2^{O(dim)}n\ln n+(2/\varepsilon)^{O(dim)}n$.
\end{theorem}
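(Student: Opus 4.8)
The plan is to rerun the construction of Section~4.3 essentially verbatim, replacing only the one place that used the norm structure of $\mathbb R^d$ --- the covering of the unit ball $B(\mathbf 0,1)$ via Facts~\ref{f2}--\ref{f3} --- by the covering guaranteed in a doubling space by Lemma~\ref{lem6}. Everything in Sections~4.1--4.2 is metric-only: the chain of inclusions leading from $B(x_1,d_i(x_1,x_2))$ to $B(a_k,r_i(k,t))$, the two-sided estimate $d_i(a_k,b_k)(1-2/t)\le d_i(x_1,x_2)\le d_i(a_k,b_k)(1+2/t)$, and the statement and proof of Lemma~\ref{lem5}, all use only the triangle inequality and properties ({\tt c}), ({\tt d}) of a $t$-WSPD. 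So these facts apply in the space $({\cal M},dist)$ unchanged, and it remains only to produce a set $C$ satisfying the hypothesis of Lemma~\ref{lem5}.

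First I would fix the same parameters as in Section~4.3, namely $I=\big\lceil 1/\log_\varepsilon 0.9\big\rceil$ and $t=\max\{10,\,(1+\varepsilon)/\varepsilon\}$, which give $I<1+10\ln(1/\varepsilon)$, $\delta=\varepsilon^{1+1/I}\ge 0.9\,\varepsilon$, and, exactly as before, $r_i(k,t)\le 2.2\,d_i(a_k,b_k)$ for all $k,i$. Hence for every $k,i$ the ratio
$$\sigma_{k,i}=\frac{\delta\,d_i(a_k,b_k)(1-2/t)}{r_i(k,t)}\ \ge\ \delta\cdot\frac{0.8}{2.2}\ >\ 0.3\,\varepsilon.$$
Then I would take a $t$-WSPD $\{A_1,B_1\},\dots,\{A_s,B_s\}$ of $X$ from Fact~\ref{f4}, with $s=t^{O(dim)}n=(2/\varepsilon)^{O(dim)}n$, pick representatives $a_k\in A_k$, $b_k\in B_k$, and for each $k=1,\dots,s$, each $i=1,\dots,I$, and each $x\in\{a_k,b_k\}$, apply Lemma~\ref{lem6}({\tt a}) with $r=r_i(k,t)$ and $\sigma=\sigma_{k,i}$ to cover $B(x,r_i(k,t))$ by at most $(2/\sigma_{k,i})^{dim}=(2/\varepsilon)^{O(dim)}$ balls of radius $\sigma_{k,i}\,r_i(k,t)=\delta\,d_i(a_k,b_k)(1-2/t)$. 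Let $C$ be the set of the centers of all these balls. By construction $C$ satisfies the hypothesis of Lemma~\ref{lem5}, so $X\cup C$ is a $(1+\varepsilon)$-collection for $X$, and its cardinality is at most $n+2sI\cdot(2/\varepsilon)^{O(dim)}=(2/\varepsilon)^{O(dim)}n$, since $I=O(\ln(2/\varepsilon))$ is absorbed into $(2/\varepsilon)^{O(dim)}$. This proves the existence claim; note that it needs no oracle, because Lemma~\ref{lem6}({\tt a}) merely asserts that the required coverings exist.

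For the algorithmic claim I would run the randomized algorithm of Fact~\ref{f4}, which yields the $t$-WSPD in expected time $2^{O(dim)}n\ln n+t^{O(dim)}n=2^{O(dim)}n\ln n+(2/\varepsilon)^{O(dim)}n$, and replace each use of Lemma~\ref{lem6}({\tt a}) by Lemma~\ref{lem6}({\tt b}): each covering is then computed with $(1/\sigma_{k,i})^{O(dim)}=(2/\varepsilon)^{O(dim)}$ oracle queries, each query costing $2^{O(dim)}$ time, i.e. $(2/\varepsilon)^{O(dim)}$ time per ball and $2sI\cdot(2/\varepsilon)^{O(dim)}=(2/\varepsilon)^{O(dim)}n$ time over all balls. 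Adding the two contributions gives the claimed expected running time $2^{O(dim)}n\ln n+(2/\varepsilon)^{O(dim)}n$. I do not anticipate a real obstacle: the only care needed is (i) to note that every inequality in Section~4.2 is metric-only so Lemma~\ref{lem5} transfers intact, and (ii) to keep the $O(dim)$ exponents consistent --- both the covering size $(2/\sigma)^{dim}$ in Lemma~\ref{lem6}({\tt a}) and the query count $(1/\sigma)^{O(dim)}$ in Lemma~\ref{lem6}({\tt b}) are $(2/\varepsilon)^{O(dim)}$ once $\sigma>0.3\,\varepsilon$. The one structural point worth stating explicitly is the separation of the unconditional existence part from the construction part, the latter being the only place the ball-covering oracle enters.
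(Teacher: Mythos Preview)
Your proposal is correct and follows essentially the same route as the paper's own proof: fix the same parameters $I$ and $t$, invoke Fact~\ref{f4} for a $t$-WSPD of size $(2/\varepsilon)^{O(dim)}n$, and replace the normed-space ball covering by Lemma~\ref{lem6} to satisfy the hypothesis of Lemma~\ref{lem5}, with the existence part using Lemma~\ref{lem6}({\tt a}) and the algorithmic part using Lemma~\ref{lem6}({\tt b}) together with the oracle. Your explicit separation of the unconditional existence claim from the oracle-dependent constructive claim, and your remark that everything in Section~4.2 is metric-only, match the paper's argument exactly.
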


\begin{proof}
According to Lemma~\ref{lem5}, constructing a $(1+\varepsilon)$-collection for the given set $X$ is reduced to covering each ball $B(x,r_i(k,t))$, $k=1,\dots,s$, $i=1,\dots,I$, $x\in\{a_k,b_k\}$, by balls of radius $\delta d_i(a_k,b_k)(1-2/t)$.
We will use the same values of the parameters $t$ and $I$ as in Sect.~4.3: $I=\big\lceil 1/\log_\varepsilon 0.9\big\rceil$ and $t=\max\big\{10,\,\frac{1+\varepsilon}{\varepsilon}\big\}$.
As shown above, it follows that $\delta\ge 0.9\,\varepsilon$ and $r_i(k,t)\le 2.2\,d_i(a_k,b_k)$.

Note that the $t$-WSPD of the set $X$ constructed by the algorithm from Fact~\ref{f4} is of size $s=t^{O(dim)}n=(2/\varepsilon)^{O(dim)}n$.
On the other hand, Lemma~\ref{lem6} implies that each ball of radius $2.2\,d_i(a_k,b_k)$ in the space $({\cal M},dist)$ admits a covering by $(2/\delta)^{O(dim)}$ balls of radius $\delta d_i(a_k,b_k)(1-2/t)$ and that this covering can be constructed using $(2/\delta)^{O(dim)}$ queries to the specified oracle.
So the total number of covering balls in such coverings of all the balls $B(x,r_i(k,t))$ is $2sI\cdot(2/\delta)^{O(dim)}=(2/\varepsilon)^{O(dim)}n$ and the total time required to construct these coverings is $2sI\cdot(2/\delta)^{O(dim)}\cdot 2^{O(dim)}=(2/\varepsilon)^{O(dim)}n$.
Thus, taking into account the running time of the algorithm from Fact~\ref{f4}, we obtain the resulting time complexity $2^{O(dim)}n\ln n+(2/\varepsilon)^{O(dim)}n$ of constructing a $(1+\varepsilon)$-collection.
The theorem is proved.
\hfill$\Box$
\end{proof}

\section{Conclusion}
We study the concept of a \emph{$(1+\varepsilon)$-approximate centers collection}, an extension of a given set of points which contains a $(1+\varepsilon)$-approximation of each point of space with respect to the distances to all the given points.
We prove that, for any fixed $\varepsilon>0$, eve\-ry set of $n$ points in a metric space of fixed doubling dimension admits a linear-size $(1+\varepsilon)$-collection and that, in many cases, this $(1+\varepsilon)$-collection can be constructed in almost linear time.
It provides a universal discretization of geometric optimization problems reducible to finding points in space (centers) with the minimum value of arbitrary objective function which has a continuity-type dependence on the distances from the centers to given input points.


\providecommand*\hyphen{-}

\end{document}